\newtheorem{thm}{Theorem}
\newtheorem{lem}{Lemma}
\newtheorem{mydef}{Definition}
\newtheorem{asu}{Assumption}
\newtheorem{col}{Corollary}
\newtheorem{rem}{Remark}
\begin{document}
%
\title{Adaptive Finite Time Stability of Delayed Systems With Applications to Network Synchronization}

\author{Xiwei Liu,~\IEEEmembership{Senior member, IEEE}
\thanks{This work was supported by the National Science Foundation of China under Grant No. 61673298, 61203149; Shanghai Rising-Star Program of China under Grant No. 17QA1404500; Natural Science Foundation of Shanghai under Grant No. 17ZR1445700; the Fundamental Research Funds for the Central Universities of Tongji University.}
\thanks{Department of Computer Science and Technology, Tongji University, and with the Key Laboratory of Embedded System and Service Computing, Ministry of Education, Shanghai 201804, China. Corresponding author: Xiwei Liu. Email:xwliu@tongji.edu.cn}}

\maketitle
\begin{abstract}
The finite time stability (FnTSta) theory of delayed systems has not been set up until now. In this paper, we propose a two-phases-method (2PM), to achieve this object. In the first phase, we prove that the time for norm of system error evolving from initial values to 1 is finite; then in the second phase, we prove that the time for norm of the error evolving from 1 to 0 is also finite, thus FnTSta is obtained. Considering the cost and complexity of controller, we use only two simple terms to realize this aim. For the proposed 2PM, time delays can be the same or asynchronous, bounded or unbounded, etc; the norm can be $2$-norm, $1$-norm, $\infty$-norm, which show that 2PM is powerful and has a wide scope of applications. Furthermore, we also prove the adaptive finite time stability (AFnTSta) theory of delayed systems using 2PM. As an application of the obtained FnTSta theory, we consider the finite time outer synchronization (FnTOSyn) for complex networks with asynchronous unbounded time delays, corresponding criteria are also obtained. Finally, two numerical examples are given to demonstrate the validity of our theories.
\end{abstract}
\begin{IEEEkeywords}
Adaptive, Complex network, Finite time, Unbounded time delays, Synchronization, Two-phases-method
\end{IEEEkeywords}

\section{Introduction}
Convergence rate is an important index for investigating the dynamics of system equilibrium. The most studied cases are exponential stability and asymptotic stability by using the negative feedback technique, where the state of controlled system will approach to the equilibrium as close as possible but never be equal to the equilibrium. Compared with this infinite process, finite time stability (FnTSta) is proposed \cite{BB1998}-\cite{MP2008}, which means that the system will reach the equilibrium in FnT (called the settling time). Of course, enhancement of the final target requires the redesign of external control, i.e., linear feedback can realize exponential stability, while FnTSta must use nonlinear feedback, and the relationship between time and control energy was discussed in \cite{linwei2017}. Especially, in \cite{BB2000}, the authors used
\begin{align}
\dot{V}(t)\le -cV(t)^{\alpha},~ 0<\alpha<1
\end{align}
to investigate the FnTSta, which now has been the standard method of most papers in this area. In FnTSta analysis, the settling time depends on initial values. If the settling time is independent of initial values, it is called the fixed time stability (FxTSta) \cite{P2012}.

Synchronization/consensus for complex networks, including neural networks (NNs) and multi-agent systems (MAS), as a typical example of cooperation between coupled systems has attracted more and more attention, whose development process also evolves from exponential/asymptotic synchronization \cite{Murray2004, Chenliulu2007} to FnTSyn/FxTSyn along with the development of stability theory. \cite{c2006} considered FnTSta using the normalized and signed gradient and applied it to network consensus; \cite{jiangwang2011} investigated FnT consensus using a monotonic function; \cite{yangtac2016} investigated FnTSyn with markovian topology and impulsive effects; \cite{licaojiang2019} studied FnT fractional-order networks synchronization; \cite{p2015}-\cite{liuchen2018} set up the general theory for FnTSyn/FxTSyn; \cite{huyunn2017} investigated FxTSyn with discontinuous neural networks; \cite{liuxiaoyang2019} studied FnT/FxT pinning synchronization for networks with stochastic disturbances; \cite{zuoii2018} presented a review of FxTSyn; \cite{rios2017} designed FnT/FxT parameter identification algorithms.

In the real world, time delay is an important factor for consideration. Firstly, it arises in many cases, such as the ability of information processing unit is limited, the long distance in information transmission exists, or the outbreak of the virus has a incubation period, etc. Secondly, there are many types of delays, such as constant delay \cite{liuchen2016}, time-varying delay, bounded delay, unbounded delay \cite{chenmu}, distributed delay, etc. Thirdly, the existence of delay may impact even change the dynamical behaviors of systems. Therefore, there have been various works investigating the FnTSta or FnTSyn for delayed systems. \cite{wancaonn2016} investigated the robust FxTSyn for master-slave coupled delayed Cohen-Grossberg NNs by designing an external control with time delays (equation (13) in \cite{wancaonn2016}); \cite{liujianghu2017} studied the FnTSyn with aperiodically intermittent control (AIC) by designing a control with time delays (equation (10) in \cite{liujianghu2017}); \cite{guogonghuang2018} and \cite{weicaocn2018} investigated FnTSyn of inertial memristive delayed NNs by designing a control with time delays (equation (6) in \cite{guogonghuang2018} and equation (12) in \cite{weicaocn2018}); \cite{ganxaosheng2019} studied the FxT outer synchronization for networks with aperiodically semi-intermittent control by designing a control with time delays (equation (2.5) in \cite{ganxaosheng2019}).

Therefore, based on above discussions, we would raise {\bf Problem 1: ``For delayed systems, can we design a simple controller without using the time delay to realize FnTSta? How to prove its validity?''}

On the other hand, adaptive technique is important in real applications. In many cases, the parameters cannot be obtained directly and parameters may also be varying, therefore, a constant control coefficient may be not enough, while adaptive technique can solve this trouble and has been widely used. For example, in \cite{yanli2019}, the authors realized the adaptive exponential synchronization for AIC. Therefore, if Problem 1 is solved, we would like to propose a more difficult {\bf Problem 2: ``For delayed systems, can we design a simple adaptive rule, such that FnTSta can be realized? How to prove its validity?''}

Now, we introduce some excellent related works about the two problems. \cite{jingzhangmeifan2019} studied FnTSyn via AIC, where the time delay was constant and the external control contained only two terms, by designing a Lyapunov function using this constant delay, the authors proved a useful FnT lemma for AIC. \cite{liuqin2019} considered the drive-response FnTSyn for complex-valued NNs with bounded and differentiable time-varying delays, and distributed delays,  designed a new Lyapunov-Krasovskii function using $1$-norm, and proved its derivative was less than a negative constant, thus the FnTSyn was proved. \cite{yangprop} investigated the FnTSyn for memristive NNs with proportional delay $qt$, by using Filippov solution and Lyapunov-Krasovskii functional method, FnTSyn was proved. It is worth mentioning that an adaptive rule was also designed in \cite{yangprop}. The limitation of these previous works is that time delays are required to be concrete in order to construct Lyapunov-Krasovskii functional, like constant \cite{jingzhangmeifan2019}, or bounded and differentiable \cite{liuqin2019}, or proportional \cite{yangprop}; moreover, time delays are required to be the same for each node.

Motivated by the above analysis, in this paper, we will address the above two problems. The contributions and adavantages of this paper can be summarised as follows:

{\bf (1): The time delay for FnTSta has a large allowable range.} It can be bounded or unbounded, the same or asynchronous, differentiable or not differentiable, etc. To achieve this aim, we will use the maximum-value method \cite{chenmu,liuchen2016}.

{\bf (2): Two-phases-method (2PM) is shown to be powerful in FnT problems.} Two phases means that the proof process is divided into two phases. In the first phase, the norm of system error will be proved to evolve from initial values to $1$ in FnT (of course, $1$ can be replaced by any constant). In the second phase, the norm of system error will be proved to evolve from $1$ to $0$ in FnT. Related works about 2PM can be found in \cite{wangchen2018,wangchen2019,liuli2020}, where the external control are composed of three terms to realize FnT anti-synchronization, while in this paper the number of control terms is reduced to two.

{\bf (3): The FnTSta theory for delayed systems is set up using 2PM.} Therefore, problem 1 has been completely solved. The corresponding theorems under $2$-norm, $1$-norm and $\infty$-norm are presented respectively. Moreover, the FnTSta theory is applied to FnT network outer synchronization problem.

{\bf (4): The AFnTSta theory for delayed system is set up using 2PM.} Therefore, problem 2 has been completely solved. The corresponding theorems and adaptive rules under three norm are given respectively. Moreover, AFnTSta theory is also applied to FnT network outer synchronization problem.

The rest of this paper is organized as follows. In Section \ref{FnTSta}, the FnTSta theory and AFnTSta theory for a delayed simple system model using 2PM are set up under three different norms respectively. In Section \ref{FnTSyn}, we apply the obtained theory under $2$-norm to the FnT network outer synchronization. Moreover, two adaptive schemes are also designed to realize FnTSyn. In Section \ref{ns}, we present two numerical examples to illustrate the validity of our obtained theoretical results. Finally, we conclude this paper in Section \ref{con}.

\section{Theories of finite time stability with delay}\label{FnTSta}
\subsection{A simple model}
At first, we set up some theories about (adaptive) FnTSta with unbounded (including bounded) time-varying delays of a general model:
\begin{align}\label{const}
\dot{p}(t)=&c_1p(t)+c_2p(t,\Pi(t))\nonumber\\
&-\mathrm{diag}(\mathrm{sgn}(p(t)))(c_3{\bf 1}+c_4|p(t)|)
\end{align}
where $p(t)=(p_1(t),\cdots,p_m(t))^T\in R^{m\times 1}$,
\begin{align*}
p(t,\Pi(t))=&(p_1(t-\pi_1(t)),\cdots,p_m(t-\pi_m(t)))^T,\\
\mathrm{sgn}(p(t))=&(\mathrm{sign}(p_1(t)),\cdots,\mathrm{sign}(p_m(t)))^T, \\
|p(t)|=&(|p_1(t)|,\cdots,|p_m(t)|)^T, \\
{\bf 1}=&(1,\cdots,1)^T\in R^m.
\end{align*}

\begin{asu}\label{as1}
As for the asynchronous unbounded time delay $\pi_i(t)$, suppose we can find a continuous function $\pi(t)$ such that
\begin{align}
\pi_i(t)\le\pi(t)
\end{align}
where $t-\pi(t)\rightarrow+\infty$ and $\pi(t)\rightarrow+\infty$ as $t\rightarrow +\infty$.
\end{asu}

\begin{asu}\label{as2}
For this $\pi(t)$, suppose we can find a nondecreasing function $\mu(t)$ such that $\lim_{t\rightarrow+\infty}\mu(t)=+\infty$, and
\begin{align}\label{mu}
\overline{\lim\limits_{t\rightarrow +\infty}}\frac{\dot{\mu}(t)}{\mu(t)}=\beta, \overline{\lim\limits_{t\rightarrow+\infty}}\frac{\mu(t)}{\mu(t-\pi(t))}=1+\eta\ge 1
\end{align}
\end{asu}

\begin{rem}
In fact, system (\ref{const}) can be simply transformed as
\begin{align*}
\dot{p}(t)=&(c_1-c_4)p(t)+c_2p(t-\pi(t))-c_3\mathrm{diag}(\mathrm{sgn}(p(t)))\cdot{\bf 1}
\end{align*}
which makes the equation more simple. However, we still take the form as (\ref{const}). The reason is that the two terms in the first line of (\ref{const}) can be regarded as the intrinsic/original function of the system, while the two terms in the second line of (\ref{const}) can be regarded as the external control on the system.
\end{rem}

\begin{mydef} (see \cite{BB2000})
The origin $p(t)=0$ is said to be a finite time stable equilibrium if the finite time convergence condition and Lyapunov stability condition hold globally.
\end{mydef}

The main method in this paper is two-phases-method (2PM), which is explicitly called in \cite{liuli2020}, its validity will be shown by using $2$-norm, $1$-norm, and $\infty$-norm respectively.

\subsection{FnTSta under $2$-norm by using 2PM}
In this subsection, we firstly choose the $2$-norm, due to the wide applications in the literature of synchronization for complex networks. Moreover, under $2$-norm, one can use the property of coupling matrix to show that pinning control can realize the synchronization.
\begin{thm}\label{const2normthm}

If there exists a positive constant $\varepsilon_1$, such that the following conditions hold:
\begin{align}
&\beta+2(c_1-c_4)+|c_2|\varepsilon_1+|c_2|m\varepsilon_1^{-1}(1+\eta)<0\label{c1}\\
&|c_2|-c_3<0\label{c2}
\end{align}
where $\beta$ and $\eta$ are defined in (\ref{mu}), then FnTSta with unbounded time delays for (\ref{const}) will be realized, where the settling time will be defined later.
\end{thm}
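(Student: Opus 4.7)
The plan is the two-phase argument on $V(t) = \|p(t)\|_2^2$ suggested by the paper. A short bookkeeping calculation — using $p(t)^T \mathrm{diag}(\mathrm{sgn}(p(t))) \mathbf{1} = \|p(t)\|_1$ and $p(t)^T \mathrm{diag}(\mathrm{sgn}(p(t))) |p(t)| = V(t)$ — gives
\begin{equation*}
\dot V(t) = 2(c_1 - c_4) V(t) + 2 c_2 p(t)^T p(t, \Pi(t)) - 2 c_3 \|p(t)\|_1 .
\end{equation*}
I would bound the cross term by Young's inequality with the stated parameter $\varepsilon_1$, and handle the asynchronous delays by the componentwise bound $p_i^2(t - \pi_i(t)) \leq \sup_{s \in [t-\pi(t), t]} \|p(s)\|_\infty^2 \leq \sup_{s \in [t-\pi(t), t]} V(s)$; summing over $i$ yields the factor $m$ appearing in condition (\ref{c1}).

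For Phase 1, the goal is to show $V(t) \leq 1$ at some finite $T_1$. I would drop the nonpositive term $-2c_3\|p(t)\|_1$ and apply the maximum-value / $\mu$-stability method of Chen--Mu to $U(t) = \mu(t) V(t)$. Argue by contradiction: if at some first time $t^*$ one had $U(t^*)$ exceeding $M_1 := \sup_{s \leq 0} \mu(s) V(s)$, then $\dot U(t^*) \geq 0$ and $U(s) \leq U(t^*)$ for $s \leq t^*$, giving $\sup_{s \in [t^*-\pi(t^*), t^*]} V(s) \leq V(t^*) \mu(t^*)/\mu(t^*-\pi(t^*))$. Dividing through by $\mu(t^*) V(t^*)$ and reading off $\beta$ and $1+\eta$ from Assumption~\ref{as2}, condition (\ref{c1}) produces the required strict sign contradiction. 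Hence $V(t) \leq M_1 / \mu(t)$ for all $t$, and since $\mu(t) \to \infty$ the first time $T_1$ at which $\mu(T_1) = M_1$ is finite, with $V(t) \leq 1$ for $t \geq T_1$.

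For Phase 2, switch to $W(t) = \|p(t)\|_2 = \sqrt{V(t)}$. Using $\|p(t)\|_1 \geq W(t)$ together with Cauchy--Schwarz on the cross term, one gets
\begin{equation*}
\dot W(t) \leq (c_1 - c_4) W(t) + |c_2| \|p(t, \Pi(t))\|_2 - c_3 .
\end{equation*}
Phase 1 is now decisive for the delay term: once $t - \pi(t) \geq T_1$ (finite by Assumption~\ref{as1}), $\|p(t, \Pi(t))\|_2^2 \leq m M_1 / \mu(t - \pi(t)) \to 0$, so the delayed contribution can be driven below $(c_3 - |c_2|)/2$ after some time $T_1'$. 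Since (\ref{c1}) forces $c_1 < c_4$ (because $\beta \geq 0$ by monotonicity of $\mu$) and (\ref{c2}) gives $|c_2| < c_3$, the right-hand side is bounded above by a negative constant $-\gamma$ on $[T_1', \infty)$, and linear decay drives $W$ to zero in additional time at most $W(T_1')/\gamma$. The settling time is then the sum of the Phase 1 estimate and this Phase 2 bound.

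The main obstacle is threading the Phase 1 contradiction through the fact that the limits in Assumption~\ref{as2} are only asymptotic ($\limsup$), so one cannot simply evaluate $\dot\mu/\mu$ or $\mu(t)/\mu(t-\pi(t))$ at the hypothetical $t^*$. I would finesse this either by restricting the contradiction argument to $t^* \geq T_\delta$ for some $T_\delta$ (folding the pre-$T_\delta$ behaviour into the sup defining $M_1$), or by working with the inflated constants $\beta + \delta$, $1 + \eta + \delta$ and letting $\delta \to 0$ at the end. Running this asymptotic slack through the maximum-value argument, while simultaneously dominating every $\pi_i$ by the single $\pi$ (which is what makes the factor $m$ unavoidable in (\ref{c1})), is the most delicate part of the argument.
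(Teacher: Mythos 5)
Your Phase I is essentially the paper's argument: the same function $\mu(t)\|p(t)\|_2^2$, the same Young-inequality split with $\varepsilon_1$, the same componentwise domination of the asynchronous delays by the single $\pi(t)$ producing the factor $m$, and the same maximum-value (Halanay-type) contradiction; your worry about the limits in Assumption \ref{as2} being only asymptotic is resolved exactly as you propose in your first option -- the paper fixes a $T$ after which the pointwise inequality (\ref{transf}) holds and runs the whole argument from $T$, absorbing the earlier history into the supremum. Phase II is where you genuinely diverge. The paper keeps the maximum-value machinery: it sets $V_2(t)=\big(p(t)^Tp(t)\big)^{1/2}+\varepsilon_2 t$ with $|c_2|-c_3+\varepsilon_2<0$, shows $\dot V_2<0$ at any point where $V_2$ attains its running supremum by bounding the delayed factor via $|p_i(t-\pi_i(t))|\le 1$ (which is the only thing Phase I is used for), and reads off the settling increment $1/\varepsilon_2$, i.e.\ essentially $1/(c_3-|c_2|)$. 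You instead exploit the stronger Phase-I output $\|p(t)\|_2^2\le M_1/\mu(t)\to 0$ to make the delayed term eventually negligible, which yields a delay-free inequality $\dot W\le-\gamma$ with no maximum-value function at all. Both routes are sound: the paper's gives a settling-time increment tied cleanly to $c_3-|c_2|$ (supporting its later remark that larger $c_3$ shortens the time, and reusable in the adaptive theorems where $c_3(t)$ varies), while yours is structurally simpler in Phase II but introduces an extra waiting time $T_1'$ and a less explicit constant. Neither treatment rigorously addresses why $p\equiv 0$ persists after the hitting time (a Filippov-type issue with the sign control), so you are not behind the paper on that point.
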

\begin{proof}
From (\ref{c1}) and Assumption \ref{as2}, we can get a time $T$, such that for $t\ge T$,
\begin{align}\label{transf}
\frac{\dot{\mu}(t)}{\mu(t)}+2(c_1-c_4)+|c_2|\varepsilon_1+|c_2|m\varepsilon_1^{-1}\frac{\mu(t)}{\mu(t-\pi(t))}<0
\end{align}

In the following, we will always discuss from $T$. 2PM will be shown carefully in the next proof procedure. If $\sup_{T-\pi(T)\le s\le T}\|p(s)\|_{2}\le 1$, then we will switch to Phase II directly; else, we will start from Phase I.

{\bf Phase I:} At first, we will prove that the norm of $p(t)$ will decrease from initial values to $1$ in FnT.

Define a Lyapunov function
\begin{align}\label{v1}
V_1(t)=\mu(t)p(t)^Tp(t)
\end{align}
and a maximum-value function
\begin{align}\label{w1}
W_1(t)=\sup\limits_{t-\pi(t)\le s \le t}V_1(s)
\end{align}
Obviously, $V_1(t)\leq W_1(t)$, $t\ge T$.
	
If $V_1(t)<W_1(t)$. Then there must exist a constant $\delta_1>0$ such that $V_1(s)\le W_1(t)$ for $s\in (t,t+\delta_1)$, i.e., $W_1(s)=W_1(t)$ for $s\in (t,t+\delta_1)$.
	
Else if there exists a point $t_1\ge T$, $V_1(t_1)= W_1(t_1)$, then
\begin{align}
&\frac{dV_1(t)}{dt}\bigg{|}_{t=t_1}=\dot{\mu}(t)p(t)^Tp(t)+2\mu(t)p(t)^T\dot{p}(t)\nonumber\\
=&\frac{\dot{\mu}(t)}{\mu(t)}\mu(t)p^T(t)p(t)+2\mu(t)p^T(t)[c_1p(t)+c_2p(t-\Pi(t))\nonumber\\
&-\mathrm{diag}(\mathrm{sgn}(p(t)))(c_3{\bf 1}+c_4|p(t)|)]\nonumber\\
\le&\big[\frac{\dot{\mu}(t)}{\mu(t)}+2c_1\big]V_1(t)+|c_2|\big[\varepsilon_1\mu(t)p^T(t)p(t)\nonumber\\
&+\varepsilon_1^{-1}\mu(t)\sum_{i=1}^mp_i(t-\pi_i(t))^2\big]\nonumber\\
&-2\mu(t)[c_3|p(t)|^T\cdot{\bf 1}+c_4p(t)^Tp(t)]\nonumber\\
\le&\big[\frac{\dot{\mu}(t)}{\mu(t)}+2(c_1-c_4)+|c_2|\varepsilon_1\big]V_1(t)+|c_2|\varepsilon_1^{-1}\frac{\mu(t)}{\mu(t-\pi(t))}\nonumber\\
&\cdot\sum_{i=1}^m\frac{\mu(t-\pi(t))}{\mu(t-\pi_i(t))}\mu(t-\pi_i(t))p_i(t-\pi_i(t))^2\label{mafan1}\\
\le&\big[\frac{\dot{\mu}(t)}{\mu(t)}+2(c_1-c_4)+|c_2|\varepsilon_1\big]V_1(t)+|c_2|\varepsilon_1^{-1}\frac{\mu(t)}{\mu(t-\pi(t))}\nonumber\\
&\cdot\sum_{i=1}^m\mu(t-\pi_i(t))\sum_{k=1}^mp_k(t-\pi_i(t))^2\nonumber\\
\le&\big[\frac{\dot{\mu}(t)}{\mu(t)}+2(c_1-c_4)+|c_2|\varepsilon_1\big]V_1(t)\nonumber\\
&+|c_2|\varepsilon_1^{-1}\frac{\mu(t)}{\mu(t-\pi(t))}\sum_{i=1}^mV_1(t-\pi_i(t))\nonumber\\
\le&\big[\frac{\dot{\mu}(t)}{\mu(t)}+2(c_1-c_4)+|c_2|\varepsilon_1\big]V_1(t)\nonumber\\
&+|c_2|m\varepsilon_1^{-1}\frac{\mu(t)}{\mu(t-\pi(t))}V_1(t)\label{mafan2}\\
\le&\big[\frac{\dot{\mu}(t)}{\mu(t)}+2(c_1-c_4)+|c_2|\varepsilon_1+|c_2|m\varepsilon_1^{-1}\frac{\mu(t)}{\mu(t-\pi(t))}\big]V_1(t)\nonumber\\
<&0\nonumber
\end{align}

Therefore,
\begin{align}
W_1(T)&\ge W_1(t)=\sup_{t-\pi(t)\le s\le t}\mu(s)p(s)^Tp(s)\nonumber\\
&\ge \mu(t-\pi(t))\sup_{t-\pi(t)\le s\le t}p(s)^Tp(s),~ t\ge T\nonumber
\end{align}
i.e.,
\begin{align}
\sup_{t-\pi(t)\le s\le t}p(s)^Tp(s)\le \frac{W_1(T)}{\mu(t-\pi(t))},
\end{align}

Since $\lim_{t\rightarrow +\infty}\mu(t-\pi(t))=+\infty$, there exists $T_1(\ge T)$,
\begin{align}\label{2norm}
\sup_{t-\pi(t)\le s\le t}\|p(s)\|_2\le 1,~~ t\ge T_1
\end{align}
i.e., system (\ref{const}) with initial values would across the hyperplane with $\sup_{t-\pi(t)\le s\le t}p(s)^Tp(t)=1$ in $T_1$.

{\bf Phase II:} Next, we will prove that the norm of $p(t)$ will decrease from $1$ to $0$ in FnT.

From condition (\ref{c2}), we can find a sufficiently small constant $\varepsilon_2>0$, such that $|c_2|-c_3+\varepsilon_2<0$. Using this $\varepsilon_2$, we define another Lyapunov function
\begin{align}\label{v2}
V_2(t)=\big(p(t)^Tp(t)\big)^{\frac{1}{2}}+\varepsilon_2t,~~ t\ge T_1
\end{align}
and the maximum-value function
\begin{align}\label{w2}
W_2(t)=\sup_{t-\pi(t)\le s\le t}V_2(s),~~ t\ge T_1
\end{align}

Similarly, $V_2(t)\le W_2(t)$. If $V_2(t)<W_2(t)$, then there must exist a constant $\delta_2>0$ such that $V_2(s)\le W_2(t)$ for $s\in (t,t+\delta_2)$, i.e., $W_2(s)=W_2(t)$ for $s\in (t,t+\delta_2)$.
	
Else if there exists a point $t_2\ge T_1$, $V_2(t_2)= W_2(t_2)$. Differentiating $V_2(t)$ along (\ref{const}),
\begin{align}
&\dot{V}_2(t){|}_{t=t_2}=\big(p(t)^Tp(t)\big)^{-\frac{1}{2}}p^T(t)[c_1p(t)+c_2p(t-\Pi(t))\nonumber\\
&-\mathrm{diag}(\mathrm{sgn}(p(t)))(c_3{\bf 1}+c_4|p(t)|)]+\varepsilon_2\nonumber\\
=&\big(p(t)^Tp(t)\big)^{-\frac{1}{2}}\big[(c_1-c_4)p(t)^Tp(t)\nonumber\\
&+|c_2|\sum_{i=1}^m|p_i(t)||p_i(t-\pi_i(t))|-c_3\sum_{i=1}^m|p_i(t)|\big]+\varepsilon_2\nonumber\\
\le&\big(p(t)^Tp(t)\big)^{-\frac{1}{2}}\big[(|c_2|-c_3)\sum_{i=1}^m|p_i(t)|\big]+\varepsilon_2\nonumber\\
\le&(|c_2|-c_3)\big(p(t)^Tp(t)\big)^{-\frac{1}{2}}\big(\sum_{i=1}^mp_i(t)^2\big)^{\frac{1}{2}}+\varepsilon_2\label{mafan3}\\
=&|c_2|-c_3+\varepsilon_2<0\nonumber
\end{align}
where (\ref{mafan3}) is obtained due to $\|p(t)\|_2\le\|p(t)\|_{1}$.

Therefore, for all $t\ge T_1$
\begin{align*}
&\big(p(t)^Tp(t)\big)^{\frac{1}{2}}+\varepsilon_2t=V_2(t)\le W_2(t)\le W_2(T_1) \\
=&\sup_{T_1-\pi(T_1)\le s\le T_1}\big(\big(p(s)^Tp(s)\big)^{\frac{1}{2}}+\varepsilon_2 s\big)\\
\le&\sup_{T_1-\pi(T_1)\le s\le T_1}\big(p(s)^Tp(s)\big)^{\frac{1}{2}}+\varepsilon_2 T_1=1+\varepsilon_2 T_1
\end{align*}
i.e.,
\begin{align}\label{key}
\big(p(t)^Tp(t)\big)^{\frac{1}{2}}\le 1-\varepsilon_2(t-T_1)
\end{align}
According to the above inequality, norm of $p(t)$ will evolve from $1$ to $0$ in finite (fixed) time. Especially, $p(t)\equiv 0$ for any
\begin{align}\label{fixed}
t\ge T_2=\frac{1}{\varepsilon_2}+T_1.
\end{align}
The proof is completed.
\end{proof}

\begin{rem}
The advantage of the introduced maximum-value functions $W_1(t)$ in (\ref{w1}) and $W_2(t)$ in (\ref{w2}) is that it can relax the requirement of time delays, i.e., time delays can be asynchronous. When all time delays are the same, i.e., $\pi_i(t)=\pi(t)$, then the process from (\ref{mafan1}) to (\ref{mafan2}) can be improved with `$m$' in (\ref{mafan2}) being replaced by `$1$'.
\end{rem}

\begin{rem}
According to the form of external control, we can find that $|p(t)|^0, |p(t)|^1$ are both needed. In fact, $|p(t)|^{\alpha}, 0<\alpha<1$ can also be added, but it is not necessary. Related works using this term to realize FnT anti-synchronization can be found in \cite{wangchen2018}-\cite{liuli2020}. If FxTSta is required, then the term $|p(t)|^{\alpha}$, where $\alpha>1$ should be added, its proof is similar to the above, the first phase will use FxT, since the second phase is FxT, so the whole time will be FxT.
\end{rem}

\begin{rem}
In fact, Phase II can be proved by using an alternative function as $V_2(t)=p(t)^Tp(t)$ and differentiating it directly, interested reader can do this by yourself. In the proof, we do not adopt this function due to our proposed function (\ref{v2}) can be applied in other cases such as FnTSta under adaptive rule, which will be shown later.
\end{rem}

Next, we consider some concrete forms of time delays.
\begin{col}\label{proportional}
(FnTSta with proportional delay)
For system (\ref{const}), with $\pi_i(t)\le qt$, if there exists a positive constant $\varepsilon_1$, such that (\ref{c2}) and
\begin{align}
2(c_1-c_4)+|c_2|\varepsilon_1+|c_2|m\varepsilon_1^{-1}(1-q)^{-\varrho}<0,\label{power1}
\end{align}
hold, where $\varrho>0$ is a small positive constant, then FnTSta with proportional delay for (\ref{const}) will be realized.
\end{col}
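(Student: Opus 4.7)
The plan is to reduce Corollary \ref{proportional} directly to Theorem \ref{const2normthm} by exhibiting a dominating delay function $\pi(t)$ and a weight function $\mu(t)$ that satisfy Assumptions \ref{as1} and \ref{as2}, and then checking that condition (\ref{power1}) is exactly the instantiation of (\ref{c1}) under those choices. Throughout I shall tacitly assume $0<q<1$, since otherwise the delay $qt$ is not a valid retardation.

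First, I would take $\pi(t)=qt$ as the common dominating delay. Because $\pi_i(t)\le qt=\pi(t)$ and $t-\pi(t)=(1-q)t\to +\infty$, $\pi(t)\to+\infty$ as $t\to+\infty$, Assumption \ref{as1} is immediate. Next, for Assumption \ref{as2} I would propose the polynomial weight $\mu(t)=t^{\varrho}$ (or $(1+t)^{\varrho}$ to avoid the origin), where $\varrho>0$ is exactly the small constant appearing in (\ref{power1}). This $\mu$ is nondecreasing with $\mu(t)\to+\infty$, and a short computation gives
\begin{align*}
\frac{\dot\mu(t)}{\mu(t)}=\frac{\varrho}{t}\longrightarrow 0,\qquad
\frac{\mu(t)}{\mu(t-\pi(t))}=\Bigl(\frac{t}{(1-q)t}\Bigr)^{\varrho}=(1-q)^{-\varrho}.
\end{align*}
Hence the parameters in (\ref{mu}) take the explicit values $\beta=0$ and $1+\eta=(1-q)^{-\varrho}$, so Assumption \ref{as2} holds.

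Substituting $\beta=0$ and $1+\eta=(1-q)^{-\varrho}$ into hypothesis (\ref{c1}) of Theorem \ref{const2normthm} yields
\begin{align*}
0+2(c_1-c_4)+|c_2|\varepsilon_1+|c_2|m\varepsilon_1^{-1}(1-q)^{-\varrho}<0,
\end{align*}
which is identical to (\ref{power1}), while (\ref{c2}) is assumed directly. Therefore all hypotheses of Theorem \ref{const2normthm} are in force, and its conclusion gives FnTSta of (\ref{const}) with proportional delay.

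The only real modelling step is the choice of $\mu(t)$: it has to make $\dot\mu/\mu$ tend to zero (so that $\beta$ can be dropped from the inequality) while keeping $\mu(t)/\mu((1-q)t)$ finite. A logarithmic weight would force the delay ratio to $1$ and lose all flexibility in $\varrho$; an exponential weight would make $\beta$ strictly positive and also blow up $\mu(t)/\mu((1-q)t)$. The polynomial $t^{\varrho}$ is precisely the scale-invariant choice matched to the self-similar delay $qt$, and the small parameter $\varrho$ in (\ref{power1}) appears naturally as the tunable exponent that keeps the delay ratio $(1-q)^{-\varrho}$ as close to $1$ as needed.
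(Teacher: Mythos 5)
Your proposal is correct and follows essentially the same route as the paper: take $\pi(t)=qt$, choose $\mu(t)=t^{\varrho}$, compute $\beta=0$ and $1+\eta=(1-q)^{-\varrho}$, and invoke Theorem \ref{const2normthm}, whose condition (\ref{c1}) then reduces exactly to (\ref{power1}). Your closing discussion of why the polynomial weight is the right scale-matched choice is a nice addition but does not change the argument.
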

\begin{proof}
Since $\pi(t)=qt, 0<q<1$, we can choose $\mu(t)=t^{\varrho}, \varrho>0$. Obviously, $\mu(t)$ is increasing,
\begin{align*}
&\lim_{t\rightarrow+\infty}\frac{\dot{\mu}(t)}{\mu(t)}=\lim_{t\rightarrow+\infty}\frac{\varrho t^{\varrho-1}}{t^{\varrho}}=\lim_{t\rightarrow+\infty}\frac{\varrho}{t}=0\\
&\lim_{t\rightarrow+\infty}\frac{\mu(t)}{\mu(t-\pi(t))}=\lim_{t\rightarrow+\infty}\frac{t^{\varrho}}{(t-qt)^{\varrho}}=(1-q)^{-\varrho}
\end{align*}
According to Theorem \ref{const2normthm}, the proof is completed.
\end{proof}

More types of unbounded time delays $\pi(t)$ and corresponding functions $\mu(t)$ can be found in pioneering work \cite{chenmu}.

\begin{col}\label{constantdelay}
(FnTSta with bounded time delay)
For system (\ref{const}), with $\pi_i(t)\le\pi$, if there exists a positive constant $\varepsilon_1$, such that (\ref{c2}) and
\begin{align}
&\varpi+2(c_1-c_4)+|c_2|\varepsilon_1+|c_2|m\varepsilon_1^{-1}e^{\varpi\pi}<0,\label{a}
\end{align}
hold, where $\varpi>0$ is sufficiently small, then the FnTSta with bounded delays for (\ref{const}) will be realized. Especially, when $\pi=0$, then it becomes the FnTSta problem without delays, condition (\ref{a}) becomes $\varpi+2(c_1-c_4)<0$.
\end{col}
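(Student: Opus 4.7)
The plan is to reduce this corollary to a direct invocation of Theorem~\ref{const2normthm} by specializing the weight function $\mu(t)$ in Assumption~\ref{as2} to an exponential $\mu(t)=e^{\varpi t}$ for a small $\varpi>0$; this is the natural weight when the delays are uniformly bounded, mirroring the role played by the power weight $t^{\varrho}$ in Corollary~\ref{proportional}.

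First I would check that $\mu(t)=e^{\varpi t}$ satisfies Assumptions~\ref{as1} and~\ref{as2}. Taking $\pi(t)\equiv\pi$ as a uniform majorant of the $\pi_i(t)$, the property $t-\pi(t)\to+\infty$ is immediate, and a short computation gives $\dot{\mu}(t)/\mu(t)\equiv\varpi$ and $\mu(t)/\mu(t-\pi(t))\equiv e^{\varpi\pi}$; hence $\beta=\varpi$ and $1+\eta=e^{\varpi\pi}$. Since $\mu$ is increasing and $\mu(t-\pi(t))=e^{\varpi(t-\pi)}\to+\infty$, the step in the proof of Theorem~\ref{const2normthm} that converts the phase-I bound on $W_1$ into $\sup_{s}\|p(s)\|_2\le 1$ is preserved.

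Next I would substitute these identifications of $\beta$ and $1+\eta$ into condition (\ref{c1}); the resulting inequality is precisely (\ref{a}). Condition (\ref{c2}) transfers unchanged, so Theorem~\ref{const2normthm} applies and FnTSta follows, with the settling time given by (\ref{fixed}) and the phase-I entry time $T_1$ now quantifiable from the explicit exponential weight.

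The only step that is not a mechanical substitution is ensuring that a positive $\varpi$ can actually be chosen. Evaluating the left-hand side of (\ref{a}) at $\varpi=0$ gives $2(c_1-c_4)+|c_2|\varepsilon_1+|c_2|m\varepsilon_1^{-1}$, which is the natural delay-free structural quantity; once this is strictly negative for some $\varepsilon_1$, continuity of (\ref{a}) in $\varpi$ guarantees that any sufficiently small $\varpi>0$ preserves the strict inequality, so such a $\varpi$ can be fixed at the outset. The degenerate case $\pi=0$ collapses the delayed term to $c_2 p(t)$, so no maximum-value estimate on the cross term is needed and a direct Lyapunov argument recovers the simpler condition $\varpi+2(c_1-c_4)<0$ stated at the end of the corollary.
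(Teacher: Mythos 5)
Your proposal is correct and follows essentially the same route as the paper: choose $\mu(t)=e^{\varpi t}$, verify that Assumption~\ref{as2} holds with $\beta=\varpi$ and $1+\eta=e^{\varpi\pi}$, and then invoke Theorem~\ref{const2normthm}, whose condition (\ref{c1}) specializes exactly to (\ref{a}). Your added remarks on choosing $\varpi$ small by continuity and on the degenerate case $\pi=0$ are sound but go beyond what the paper writes down.
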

\begin{proof}
Since $\pi(t)=\pi$ is a constant, so we can choose $\mu(t)=e^{\varpi t}, \varpi>0$, which is increasing,
\begin{align*}
&\lim_{t\rightarrow+\infty}\frac{\dot{\mu}(t)}{\mu(t)}=\lim_{t\rightarrow+\infty}\frac{\varpi e^{\varpi t}}{e^{\varpi t}}=\varpi\\
&\lim_{t\rightarrow+\infty}\frac{\mu(t)}{\mu(t-\pi(t))}=\lim_{t\rightarrow+\infty}\frac{e^{\varpi t}}{e^{\varpi (t-\pi)}}=e^{\varpi\pi}
\end{align*}
According to Theorem \ref{const2normthm}, it is proved.
\end{proof}

\begin{rem}
We are interested in the role of each term in the external control. From condition (\ref{c1}), one knows that $c_4$ should be larger than $c_1$; from condition (\ref{c2}), $c_3$ should be larger than $|c_2|$, and the fixed settling time from $1$ to $0$ is determined by $1/(c_3-|c_2|)$, so larger $c_3$ means the shorter stabilization time. Therefore, one can conclude that the larger the control parameters $c_3$ and $c_4$, the faster FnTSta can be achieved.
\end{rem}

\subsection{AFnTSta under $2$-norm by using 2PM}
In some cases or environments, parameters like $c_1$ and $c_2$ may not be obtained. Therefore, we will adapt the control parameters $c_3$ and $c_4$ as large as possible, the adaptive rule will be designed and its validity will also be proved.

\begin{thm}\label{athm}
For the following system
\begin{align}\label{adaptivethm}
\dot{p}(t)=&c_1p(t)+c_2p(t-\Pi(t))\nonumber\\
&-\mathrm{diag}(\mathrm{sgn}(p(t)))(c_3(t){\bf 1}+c_4(t)|p(t)|)
\end{align}
with the adaptive rule
\begin{align}\label{dc3}
\dot{c}_3(t)=\left\{
\begin{array}{ll}
0,&~\mathrm{if}~\sup\limits_{t-\pi(t)\le s\le t}p(s)^Tp(s)>1\\
{d_1},&~\mathrm{if}~\sup\limits_{t-\pi(t)\le s\le t}p(s)^Tp(s)\le 1\\
0,&~\mathrm{if}~\sup\limits_{t-\pi(t)\le s\le t}p(s)^Tp(s)=0
\end{array}
\right.
\end{align}
and
\begin{align}\label{dc4}
\dot{c}_4(t)=\left\{
\begin{array}{ll}
d_2\mu(t)p(t)^Tp(t),&~\mathrm{if}~\sup\limits_{t-\pi(t)\le s\le t}p(s)^Tp(s)>1\\
{d_3}\|p(t)\|_2,&~\mathrm{if}~\sup\limits_{t-\pi(t)\le s\le t}p(s)^Tp(s)\le 1
\end{array}
\right.
\end{align}
where parameters $d_1, d_2, d_3$ are any positive scalars, and the initial values are set to be $c_3(0)=c_4(0)=0$. Then FnTSta can be realized.
\end{thm}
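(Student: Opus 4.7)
The plan is to import the two-phases method of Theorem~\ref{const2normthm}, augmenting each Lyapunov function by a quadratic penalty in the relevant parameter estimation error; the adaptive rules (\ref{dc3})--(\ref{dc4}) have been engineered so that the penalty's time derivative cancels the adaptive-gain contribution, reducing each phase's analysis to a non-adaptive one with an effective comparison gain.

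In Phase I, while $\sup_{s\in[t-\pi(t),t]}\|p(s)\|_2>1$, rule (\ref{dc3}) pins $c_3\equiv 0$ and rule (\ref{dc4}) gives $\dot c_4=d_2\mu p^Tp$. I pick $c_4^*>0$ large enough that (\ref{c1}) with $c_4$ replaced by $c_4^*$ is strict for some $\varepsilon_1>0$, and consider
\begin{align*}
V_1(t)=\mu(t)p(t)^Tp(t)+\tfrac{1}{d_2}\bigl(c_4(t)-c_4^*\bigr)^2,\qquad W_1(t)=\sup_{t-\pi(t)\le s\le t}V_1(s).
\end{align*}
Direct differentiation shows that the penalty contribution $\tfrac{2}{d_2}(c_4-c_4^*)\dot c_4=2(c_4-c_4^*)\mu p^Tp$ absorbs the $-2c_4\mu p^Tp$ term in $2\mu p^T\dot p$ and leaves an effective $-2c_4^*\mu p^Tp$, so that $\dot V_1$ depends on the adaptive gain only through $c_4^*$. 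Provided the bootstrap $c_4(t)\le c_4^*$ holds---which makes $(c_4(s)-c_4^*)^2$ non-increasing in $s$ and forces $\mu(t)\|p(t)\|_2^2$ to coincide with $\sup_s\mu(s)\|p(s)\|_2^2$ at each active maximum of $V_1$---the estimates of Theorem~\ref{const2normthm} (in particular the key bound $\mu(t)\sum_i p_i(t-\pi_i)^2\le m\tfrac{\mu(t)}{\mu(t-\pi(t))}\sup_s\mu(s)\|p(s)\|_2^2$) yield $\dot V_1\le 0$ at every active maximum, so $W_1$ is eventually non-increasing. Hence $\mu(t)\|p(t)\|_2^2$ is bounded, $\mu\to\infty$ forces $\sup\|p\|_2\le 1$ at some finite $T_1$, and the bootstrap is closed by combining the bound $(c_4-c_4^*)^2\le d_2 W_1(T)$ with the a priori finiteness of $T_1$.

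In Phase II, $\sup\|p\|_2\le 1$ activates $\dot c_3=d_1$ and $\dot c_4=d_3\|p\|_2$. I choose $c_3^*>\sqrt{m}|c_2|$, $c_4^{**}>c_1$, and a small $\varepsilon_2>0$ with $\sqrt{m}|c_2|-c_3^*+\varepsilon_2<0$, and consider
\begin{align*}
V_2(t)=(p^Tp)^{1/2}+\tfrac{1}{2d_1}(c_3-c_3^*)^2+\tfrac{1}{2d_3}(c_4-c_4^{**})^2+\varepsilon_2 t.
\end{align*}
The two penalty derivatives contribute $(c_3-c_3^*)+(c_4-c_4^{**})\|p\|_2$, which absorb the $-c_3\|p\|_1/\|p\|_2$ and $-c_4\|p\|_2$ pieces of $p^T\dot p/\|p\|_2$ and replace the adaptive gains by the comparison constants $c_3^*,c_4^{**}$; the residual $c_3(1-\|p\|_1/\|p\|_2)\le 0$ is harmless. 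Using the pointwise bound $|p_i(t-\pi_i)|\le\sup_s\|p(s)\|_2\le 1$ on the delayed cross term then gives $\dot V_2\le (c_1-c_4^{**})\|p\|_2+\sqrt{m}|c_2|-c_3^*+\varepsilon_2\le -\varepsilon_2$ pointwise, without any maximum-value argument. Integrating and dropping the non-negative penalty terms yields $\|p(t)\|_2\le V_2(T_1)+\varepsilon_2 T_1-2\varepsilon_2 t$, which forces $\|p(t)\|_2\equiv 0$ in a finite additional time.

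The hardest step I anticipate is closing the bootstrap $c_4(t)\le c_4^*$ in Phase I: without it, the penalty $(c_4-c_4^*)^2$ grows in $t$, breaks the identification $\mu(t)\|p(t)\|_2^2=\sup_s\mu(s)\|p(s)\|_2^2$ at the active maximum, and leaves $\dot V_1$ with uncontrolled sign whenever the state piece is dwarfed by the parameter-error piece. The escape is that the non-increase of $V_1$ itself, licensed under the bootstrap, pins $(c_4-c_4^*)^2\le d_2 W_1(T)$, which together with the finiteness of the Phase~I duration keeps $c_4$ strictly below $c_4^*$ provided the latter is chosen sufficiently large relative to initial data. A secondary concern is self-consistency of Phase~II: $\sup\|p\|_2\le 1$ must persist for the Phase~II branch of (\ref{dc3})--(\ref{dc4}) to stay active, and the $\|p\|_2$ bound deliverable from $V_2$ is loose because of the two quadratic penalties, so any temporary excursion above $1$ would have to be absorbed by a switching-based bookkeeping of the Lyapunov decrease across the phase transition.
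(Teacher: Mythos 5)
Your overall architecture coincides with the paper's: the same augmented Lyapunov functions $\mu p^Tp+\tfrac{1}{d_2}(c_4-c_4^{\star})^2$ in Phase I and $(p^Tp)^{1/2}+\tfrac{1}{2d_1}(c_3-c_3^{\star})^2+\tfrac{1}{2d_3}(c_4-c_4^{\star})^2+\varepsilon_2^{\star}t$ in Phase II, the same cancellation of the adaptive gain against the penalty derivative, and the same maximum-value machinery inherited from Theorem~\ref{const2normthm}. You also correctly identify the crux: the delayed-term estimate at an active maximum of $W$ needs $\mu(t)p(t)^Tp(t)$, not the full $V$, to be the maximizer over $[t-\pi(t),t]$, which requires $(c_4(s)-c_4^{\star})^2$ to be non-increasing, i.e.\ $c_4(t)\le c_4^{\star}$.

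The genuine gap is your proposed closure of that bootstrap. The bound $(c_4(t)-c_4^{\star})^2\le d_2W_1(T)$ cannot keep $c_4$ below $c_4^{\star}$: since $c_4(0)=0$, already $W_1(T)\ge\tfrac{1}{d_2}(c_4^{\star})^2$, so the inequality only yields $c_4(t)\le 2c_4^{\star}$ plus a state contribution, and enlarging $c_4^{\star}$ enlarges the allowance proportionally --- the argument is circular and no choice of $c_4^{\star}$ ``relative to initial data'' closes it. The paper's resolution is not a bootstrap but a dichotomy, and it is the idea you are missing: if $c_4(t)$ ever exceeds $c_4^{\star}$ at some $T^{\star}$, this is the \emph{favorable} case --- the gain is from then on at least as large as the constant gain required by condition (\ref{c1}), a larger $c_4$ only strengthens every inequality in the non-adaptive proof, so Phase I of Theorem~\ref{const2normthm} applies verbatim and nothing further is needed; only in the complementary case ($c_4(t)\le c_4^{\star}$ for all $t$) does one invoke the monotone-penalty argument, where the required monotonicity holds by hypothesis rather than by an a priori estimate. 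Your secondary concern about persistence of $\sup_{t-\pi(t)\le s\le t}\|p(s)\|_2\le 1$ across the phase switch is legitimate but is not resolved in the paper either; your Phase II derivative computation itself (pointwise, using $|p_i(t-\pi_i(t))|\le 1$ from the Phase I conclusion) matches the paper's up to the cosmetic choice of a separate comparison constant $c_4^{\star\star}$.
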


\begin{proof} 2PM will also be applied. With the same argument with that in Theorem \ref{const2normthm}, the time is considered from time $T$.

{\bf Phase I:} Prove that the norm of $p(t)$ will decrease from initial values to $1$ in FnT.

Define a new Lyapunov function
\begin{align}\label{v3}
V_3(t)=\mu(t)p(t)^Tp(t)+\frac{1}{d_2}(c_4(t)-c_4^{\star})^2
\end{align}
and the function $W_3(t)=\sup_{t-\pi(t)\le s \le t}V_3(s)$. $V_3(t)\leq W_3(t), t\ge T$.
	
If $V_3(t)<W_3(t)$. Then there must exist a constant $\delta_3>0$ such that $V_3(s)\le W_3(t)$ for $s\in (t,t+\delta_3)$, i.e., $W_3(s)=W_3(t)$ for $s\in (t,t+\delta_3)$.
	
Else, there exists a point $t_3$, $V_3(t_3)= W_3(t_3)$. Before differentiating, we discuss the limit of $c_4(t)$.

Case 1: From the definition of $\dot{c}_4(t)$ in (\ref{dc4}), we know that $c_4(t)$ is positive and monotonically increasing, if $c_4(t)>c_4^{\star}, t>T^{\star}$, where $c_4^{\star}$ is a constant such that $\beta+2(c_1-c_4^{\star})+|c_2|\varepsilon_1+|c_2|m\varepsilon_1^{-1}(1+\eta)<0$, then FnTSta can be achieved according to the analysis in Theorem \ref{const2normthm}.

Case 2: The limit of $c_4(t)$ is always bounded by $c_4^{\star}$, $(c_4(t)-c_4^{\star})^2$ would be a decreasing function, thus $\mu(t)p(t)^Tp(t)$ must be the maximum value in $[t-\pi(t), t]$.

Now, based on the above discussions, differentiating $V_3(t)$,
\begin{align}
&\dot{V}_3(t){|}_{t=t_3}\nonumber\\
=&\dot{\mu}(t)p(t)^Tp(t)+2\mu(t)p(t)^T\dot{p}(t)+{2}{d_2^{-1}}(c_4(t)-c_4^{\star})\dot{c}_4(t)\nonumber\\
=&\frac{\dot{\mu}(t)}{\mu(t)}\mu(t)p^T(t)p(t)+2\mu(t)p^T(t)[c_1p(t)+c_2p(t-\Pi(t))\nonumber\\
&-c_4(t)p(t)]+2(c_4(t)-c_4^{\star})\mu(t)p(t)^Tp(t)\nonumber\\
\le&\big[\frac{\dot{\mu}(t)}{\mu(t)}+2(c_1-c_4^{\star})+|c_2|\varepsilon_1+|c_2|\varepsilon_1^{-1}\frac{{\mu}(t)}{\mu(t-\pi(t))}\big]V_1(t)\nonumber\\
<&0\nonumber
\end{align}

Therefore,
\begin{align}
W_3(T)&\ge W_3(t)\ge\sup_{t-\pi(t)\le s\le t}\mu(s)p(s)^Tp(s)\nonumber\\
&\ge \mu(t-\pi(t))\sup_{t-\pi(t)\le s\le t}p(s)^Tp(s),\nonumber
\end{align}
i.e.,
\begin{align}
\sup_{t-\pi(t)\le s\le t}p(s)^Tp(s)\le \frac{W_3(T)}{\mu(t-\pi(t))}.
\end{align}
So there is a FnT point $T_3(\ge T)$ such that
\begin{align}\label{2normad}
\sup_{t-\pi(t)\le s\le t}p(s)^Tp(s)\le 1,~~t\ge T_3
\end{align}

{\bf Phase II:} Prove that the norm of $p(t)$ will decrease from $1$ to $0$ in FnT.

From condition (\ref{c2}), one can find a sufficiently large constant $c_3^{\star}>0$ and sufficiently small constant $\varepsilon_2^{\star}>0$, such that $\sqrt{m}|c_2|-c_3^{\star}+\varepsilon_2^{\star}<0$. Using this $\varepsilon_2^{\star}$, we define
\begin{align}\label{v4}
V_4(t)=&\big(p(t)^Tp(t)\big)^{\frac{1}{2}}+\frac{1}{2d_1}(c_3(t)-c_3^{\star})^2\nonumber\\
&+\frac{1}{2d_3}(c_4(t)-c_4^{\star})^2+\varepsilon_2^{\star}t,~~ t\ge T_3
\end{align}
and $W_4(t)=\sup_{t-\pi(t)\le s\le t}V_4(s), t\ge T_3$. Similarly, $V_4(t)\le W_4(t)$. If $V_4(t)<W_4(t)$, then there must exist a constant $\delta_4>0$ such that $V_4(s)\le W_4(t)$ for $s\in (t,t+\delta_4)$, i.e., $W_4(s)=W_4(t)$ for $s\in (t,t+\delta_4)$.
	
Else, there exists a time point $t_4\ge T_3$ such that $V_4(t_4)= W_4(t_4)$. Similar discussions can be applied for $c_3(t)$ as before, here we omit them. Differentiating $V_4(t)$, we have
\begin{align}
&\dot{V}_4(t){|}_{t=t_4}=\big(p(t)^Tp(t)\big)^{-\frac{1}{2}}p^T(t)[c_1p(t)+c_2p(t-\Pi(t))\nonumber\\
&-\mathrm{diag}(\mathrm{sgn}(p(t)))(c_3(t){\bf 1}+c_4(t)|p(t)|)]+\varepsilon_2^{\star}\nonumber\\
&+(c_3(t)-c_3^{\star})+(c_4(t)-c_4^{\star})\|p(t)\|_2\nonumber\\
=&\big(p(t)^Tp(t)\big)^{-\frac{1}{2}}\big[(c_1-c_4^{\star})p(t)^Tp(t)\nonumber\\
&+|c_2|\sum_{i=1}^m|p_i(t)||p_i(t-\pi_i(t))|-c_3(t)\sum_{i=1}^m|p_i(t)|\big]\nonumber\\
&+(c_3(t)-c_3^{\star})+\varepsilon_2^{\star}\nonumber\\
\le&|c_2|\frac{\|p(t)\|_1}{\|p(t)\|_2}-c_3(t)\frac{\|p(t)\|_1}{\|p(t)\|_2}+(c_3(t)-c_3^{\star})+\varepsilon_2^{\star}\nonumber\\
\le&\sqrt{m}|c_2|-c_3^{\star}+\varepsilon_2^{\star}<0\nonumber
\end{align}
where the last inequality is obtained due to the norm equivalence $\|p(t)\|_2\le\|p(t)\|_{1}\le\sqrt{m}\|p(t)\|_2$.

Therefore,
\begin{align*}
&\big(p(t)^Tp(t)\big)^{\frac{1}{2}}+\varepsilon_2^{\star}t\le V_4(t)\le W_4(t)\le W_4(T_3) \\
\le&1+\varepsilon_2^{\star} T_3\\
&+\frac{(c_3(T_3-\pi(T_3))-c_3^{\star}))^2}{2d_1}+\frac{(c_4(T_3-\pi(T_3))-c_4^{\star})^2}{2d_3}\\
\le&1+\varepsilon_2^{\star} T_3+\frac{(c_3^{\star})^2}{2d_1}+\frac{(c_4^{\star})^2}{2d_3},
\end{align*}
i.e.,
\begin{align}
\big(p(t)^Tp(t)\big)^{\frac{1}{2}}\le 1+\frac{(c_3^{\star})^2}{2d_1}+\frac{(c_4^{\star})^2}{2d_3}-\varepsilon_2^{\star}(t-T_3)
\end{align}
	
From the above inequality, norm of $p(t)$ will evolve from $1$ to $0$ in FnT. Especially, $p(t)\equiv 0$ for any
\begin{align}
t\ge T_4=(\varepsilon_2^{\star})^{-1}\big[1+\frac{(c_3^{\star})^2}{2d_1}+\frac{(c_4^{\star})^2}{2d_3}\big]+T_3
\end{align}
The proof is completed.
\end{proof}

\subsection{FnTSta under other norms by using 2PM}
Besides $2$-norm, there are other norms, such as $1$-norm, $\infty$-norm, etc. Next, we present the validity of proposed 2PM on FnTSta with $1$-norm and $\infty$-norm.
\begin{thm}\label{1normthm}
FnTSta with unbounded time delays for (\ref{const}) will be realized, if condition (\ref{c2}) and the following condition hold:
\begin{align*}
\beta+(c_1-c_4)+|c_2|m(1+\eta)<0.
\end{align*}
\end{thm}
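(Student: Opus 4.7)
The plan is to transcribe the $2$PM proof of Theorem \ref{const2normthm} with the Euclidean norm replaced by the $1$-norm. Two simplifications attend the switch: differentiating $\|p(t)\|_1=\sum_i|p_i(t)|$ does not involve squaring, so the factor $2$ on $(c_1-c_4)$ is absent; and the delayed contribution $c_2\sum_i\mathrm{sgn}(p_i)p_i(t-\pi_i(t))$ is majorized directly by $|c_2|\sum_i|p_i(t-\pi_i(t))|$ without any AM-GM splitting, so the weight $\varepsilon_1$ drops out. The hypothesis $\beta+(c_1-c_4)+|c_2|m(1+\eta)<0$ encodes exactly these two simplifications.

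For \emph{Phase I}, I would set $V_1(t)=\mu(t)\|p(t)\|_1$ and $W_1(t)=\sup_{t-\pi(t)\le s\le t}V_1(s)$. At a time $t_1$ where $V_1(t_1)=W_1(t_1)$, direct differentiation along (\ref{const}) gives
\begin{align*}
\dot V_1(t_1)&\le\Big[\tfrac{\dot\mu(t)}{\mu(t)}+(c_1-c_4)\Big]V_1(t)\\
&\quad+|c_2|\mu(t)\sum_i|p_i(t-\pi_i(t))|-mc_3\mu(t).
\end{align*}
Applying the $\mu(t-\pi_i(t))/\mu(t-\pi_i(t))$ insertion trick from (\ref{mafan1})--(\ref{mafan2}) together with $V_1(t-\pi_i(t))\le W_1(t)=V_1(t)$, the middle sum is bounded by $|c_2|m\,(\mu(t)/\mu(t-\pi(t)))\,V_1(t)$; dropping the nonpositive $-mc_3\mu(t)$ and invoking Assumption \ref{as2} produces a strictly negative coefficient for $t_1\ge T$. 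Hence $W_1$ is nonincreasing from $T$ and $\|p(t)\|_1\le W_1(T)/\mu(t-\pi(t))\to 0$, so there exists $T_1\ge T$ with $\sup_{t-\pi(t)\le s\le t}\|p(s)\|_1\le 1$ for every $t\ge T_1$.

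For \emph{Phase II}, condition (\ref{c2}) lets me pick $\varepsilon_2\in(0,\,m(c_3-|c_2|))$. Setting $V_2(t)=\|p(t)\|_1+\varepsilon_2 t$ and $W_2(t)=\sup_{t-\pi(t)\le s\le t}V_2(s)$, at a maximum point $t_2\ge T_1$ I compute
\begin{align*}
\dot V_2(t_2)\le(c_1-c_4)\|p(t)\|_1+|c_2|\sum_i|p_i(t-\pi_i(t))|-mc_3+\varepsilon_2.
\end{align*}
The Phase I inequality forces $c_1-c_4<0$, so the first summand is nonpositive; and Phase I supplies $|p_i(t-\pi_i(t))|\le\|p(t-\pi_i(t))\|_1\le 1$, so the second is at most $m|c_2|$. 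Hence $\dot V_2(t_2)\le m(|c_2|-c_3)+\varepsilon_2<0$, $W_2$ is nonincreasing past $T_1$, and unwinding yields $\|p(t)\|_1\le 1-\varepsilon_2(t-T_1)$, so $p(t)\equiv 0$ for $t\ge T_2=T_1+1/\varepsilon_2$.

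The main obstacle, just as in Theorem \ref{const2normthm}, is the Phase~I$\to$Phase~II handoff: one must legitimately assert $|p_i(t-\pi_i(t))|\le 1$ at every Phase II maximum point, even though $t-\pi_i(t)$ may fall well before $T_1$. I would address this by enlarging $T_1$ so that the stronger guarantee $\|p(s)\|_1\le 1$ for all $s\ge T_1-\pi(T_1)$ is in place before Phase II is opened, which is permissible because $W_1(T)/\mu(t-\pi(t))\to 0$. Beyond this, the argument is an almost verbatim translation of the $2$-norm proof into $1$-norm language.
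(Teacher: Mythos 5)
Your proposal is correct and follows essentially the same route as the paper's Appendix A proof: the same maximum-value functions $\mu(t)\|p(t)\|_1$ and $\|p(t)\|_1+\varepsilon_2 t$, the same $\mu$-insertion bound giving the factor $m\,\mu(t)/\mu(t-\pi(t))$, and the same choice $m(|c_2|-c_3)+\varepsilon_2<0$ in Phase II. The ``handoff'' you flag is already covered because the Phase I conclusion is stated as $\sup_{t-\pi(t)\le s\le t}\|p(s)\|_1\le 1$ for all $t\ge T_1$, and each delayed argument $t-\pi_i(t)$ lies in that window, so no enlargement of $T_1$ is needed.
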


\begin{thm}\label{athmnorm1}
For (\ref{adaptivethm}) with the adaptive rules (\ref{dc3}) and
\begin{align*}
\dot{c}_4(t)=\left\{
\begin{array}{ll}
d_2\mu(t)\|p(t)\|_1,&~\mathrm{if}~\sup\limits_{t-\pi(t)\le s\le t}\|p(s)\|_1>1\\
{d_3}\|p(t)\|_1,&~\mathrm{if}~\sup\limits_{t-\pi(t)\le s\le t}\|p(s)\|_1\le 1
\end{array}
\right.
\end{align*}
FnTSta can be realized.
\end{thm}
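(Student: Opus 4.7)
The plan is to mirror the proof of Theorem \ref{athm}, replacing $p^{T}p$ by $\|p(t)\|_{1}$ throughout and invoking Theorem \ref{1normthm} wherever Theorem \ref{athm} invoked Theorem \ref{const2normthm}. The computation hinges on the upper right Dini derivative
\begin{align*}
\tfrac{d^{+}}{dt}\|p(t)\|_{1}=(c_{1}-c_{4}(t))\|p(t)\|_{1}+c_{2}\sum_{i}\mathrm{sign}(p_{i}(t))p_{i}(t-\pi_{i}(t))-c_{3}(t)N(t),
\end{align*}
where $N(t)$ is the cardinality of $\{i:p_{i}(t)\ne0\}$, together with the same maximum-value-function machinery already used in Theorems \ref{const2normthm} and \ref{athm}.

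For Phase I, fix $c_{4}^{\star}$ so that $\beta+(c_{1}-c_{4}^{\star})+|c_{2}|m(1+\eta)<0$, define
\begin{align*}
V_{5}(t)=\mu(t)\|p(t)\|_{1}+\frac{1}{2d_{2}}(c_{4}(t)-c_{4}^{\star})^{2},\qquad W_{5}(t)=\sup_{t-\pi(t)\le s\le t}V_{5}(s),
\end{align*}
and case-split on whether $c_{4}(t)$ eventually exceeds $c_{4}^{\star}$ (Case 1: FnTSta follows directly from Theorem \ref{1normthm}) or stays below it (Case 2). In Case 2, at a point $t_{5}$ with $V_{5}(t_{5})=W_{5}(t_{5})$, the adaptive contribution $d_{2}^{-1}(c_{4}(t)-c_{4}^{\star})\dot{c}_{4}(t)=(c_{4}(t)-c_{4}^{\star})\mu(t)\|p(t)\|_{1}$ exactly cancels the $-c_{4}(t)\mu(t)\|p(t)\|_{1}$ produced by the signum control (note that $c_{3}(t)\equiv0$ in this phase by the natural 1-norm analogue of (\ref{dc3})), leaving effective damping $c_{1}-c_{4}^{\star}$. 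Bounding $|c_{2}|\mu(t)\sum_{i}|p_{i}(t-\pi_{i}(t))|\le|c_{2}|m\frac{\mu(t)}{\mu(t-\pi(t))}W_{5}(t)$ by the ratio trick used between (\ref{mafan1}) and (\ref{mafan2}), then passing to the limit in (\ref{mu}), yields $\dot{V}_{5}<0$; the condition $\mu(t-\pi(t))\to\infty$ then delivers a finite $T_{5}$ with $\sup_{s\in[t-\pi(t),t]}\|p(s)\|_{1}\le1$ for all $t\ge T_{5}$.

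For Phase II, pick $c_{3}^{\star}$ sufficiently large (taking $c_{3}^{\star}>m|c_{2}|$ will suffice, which is allowed because the rule for $c_{3}$ keeps growing in this phase) together with $\varepsilon_{2}^{\star}>0$ so that $m|c_{2}|-c_{3}^{\star}+\varepsilon_{2}^{\star}<0$, and define
\begin{align*}
V_{6}(t)=\|p(t)\|_{1}+\frac{(c_{3}(t)-c_{3}^{\star})^{2}}{2d_{1}}+\frac{(c_{4}(t)-c_{4}^{\star})^{2}}{2d_{3}}+\varepsilon_{2}^{\star}t,\qquad W_{6}(t)=\sup_{t-\pi(t)\le s\le t}V_{6}(s),
\end{align*}
for $t\ge T_{5}$. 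At a point where $V_{6}=W_{6}$, the adaptive contributions $(c_{3}(t)-c_{3}^{\star})$ and $(c_{4}(t)-c_{4}^{\star})\|p(t)\|_{1}$ combine with $-c_{3}(t)N(t)$ (yielding $c_{3}(t)(1-N(t))\le0$ since $N\ge1$) and with $-c_{4}(t)\|p(t)\|_{1}$ (yielding the coefficient $c_{1}-c_{4}^{\star}\le 0$ on $\|p\|_{1}$). Using $|p_{i}(t-\pi_{i}(t))|\le1$ inside the delay window then gives $\dot{V}_{6}\le|c_{2}|m-c_{3}^{\star}+\varepsilon_{2}^{\star}<0$, and the same linear-in-$t$ accounting as at the end of the proof of Theorem \ref{athm} supplies a fixed settling instant $T_{6}$ after which $p(t)\equiv0$.

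The principal obstacle will be the non-smoothness of $\|p\|_{1}$ at points where some coordinate vanishes; I plan to handle this by working with the upper right Dini derivative and the set-valued signum, as is standard whenever the control contains $c_{3}(t)\mathrm{sgn}(p(t))$. A secondary technicality is the loss of sharpness relative to the $\|p\|_{1}\le\sqrt{m}\|p\|_{2}$ step used in Theorem \ref{athm}, which forces $c_{3}^{\star}$ to be taken somewhat larger; this is harmless because the adaptive rule for $c_{3}$ is unbounded. Finally, the bounded-versus-unbounded dichotomy for the adaptive gains $c_{3}(t)$ and $c_{4}(t)$ is dispatched by exactly the monotonicity-based case split already written out in the proof of Theorem \ref{athm}.
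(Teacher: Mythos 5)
Your proposal is correct and follows essentially the same route as the paper's Appendix B: the identical two-phase Lyapunov functions $\mu(t)\|p(t)\|_1+\tfrac{1}{2d_2}(c_4(t)-c_4^{\star})^2$ and $\|p(t)\|_1+\tfrac{1}{2d_1}(c_3(t)-c_3^{\star})^2+\tfrac{1}{2d_3}(c_4(t)-c_4^{\star})^2+\varepsilon_2^{\star}t$, the same maximum-value-function argument, and the same case split on whether $c_4(t)$ exceeds $c_4^{\star}$. Your slightly larger choice $c_3^{\star}>m|c_2|$ (and the explicit handling of the count $N(t)$ of nonzero components) only makes the estimate marginally more conservative than the paper's $m(|c_2|-c_3^{\star})$ bound and does not change the conclusion.
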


\begin{thm}\label{wqnormthm}
FnTSta with unbounded time delays for (\ref{const}) will be realized, if condition (\ref{c2}) and the following condition hold:
\begin{align*}
\beta+(c_1-c_4)+|c_2|(1+\eta)<0.
\end{align*}
\end{thm}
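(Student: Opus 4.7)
The plan is to adapt the 2PM skeleton of Theorem \ref{const2normthm} with the $\infty$-norm $\|p(t)\|_\infty=\max_i|p_i(t)|$ playing the role of $\|p(t)\|_2$. The reason the $|c_2|m\varepsilon_1^{-1}(1+\eta)$ term of (\ref{c1}) is replaced here by $|c_2|(1+\eta)$ (no $m$, no $\varepsilon_1$) is structural: the max-norm sees only a single active component at each instant, so when bounding the delayed term one never passes to a full coordinatewise sum and never needs Young's inequality to split $|c_2|p^T(t)p(t-\Pi(t))$, which is exactly where the $\varepsilon_1$ and the factor $m$ entered before.

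For Phase I, I would take $V_1(t)=\mu(t)\|p(t)\|_\infty$ and $W_1(t)=\sup_{t-\pi(t)\le s\le t}V_1(s)$, and repeat the dichotomy ($V_1<W_1$ vs.\ $V_1=W_1$) used earlier. At a crossing time $t_1$ with $V_1(t_1)=W_1(t_1)$, pick any active index $i^\star$ with $|p_{i^\star}(t_1)|=\|p(t_1)\|_\infty$; the Dini derivative of $|p_{i^\star}|$ along (\ref{const}) satisfies
\[
D^+|p_{i^\star}(t_1)|\le (c_1-c_4)|p_{i^\star}(t_1)|+|c_2||p_{i^\star}(t_1-\pi_{i^\star}(t_1))|-c_3,
\]
and the monotonicity of $\mu$ together with $V_1(t_1-\pi_{i^\star}(t_1))\le V_1(t_1)$ gives the single-term bound $\mu(t_1)|p_{i^\star}(t_1-\pi_{i^\star}(t_1))|\le \frac{\mu(t_1)}{\mu(t_1-\pi(t_1))}V_1(t_1)$. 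Plugging in, the bracketed coefficient becomes $\dot\mu(t)/\mu(t)+(c_1-c_4)+|c_2|\mu(t)/\mu(t-\pi(t))$, which by the hypothesis and Assumption \ref{as2} is strictly negative for $t\ge T$; hence $W_1$ stays bounded and, because $\mu(t-\pi(t))\to\infty$, there is a finite $T_1$ after which $\sup_{t-\pi(t)\le s\le t}\|p(s)\|_\infty\le 1$.

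For Phase II, I would fix $\varepsilon_2>0$ with $|c_2|-c_3+\varepsilon_2<0$ (possible by (\ref{c2})) and use $V_2(t)=\|p(t)\|_\infty+\varepsilon_2 t$ with $W_2(t)=\sup_{t-\pi(t)\le s\le t}V_2(s)$ on $[T_1,\infty)$. At a point where $V_2=W_2$, choosing the active $i^\star$ again yields
\[
D^+V_2(t)\le (c_1-c_4)|p_{i^\star}(t)|+|c_2||p_{i^\star}(t-\pi_{i^\star}(t))|-c_3+\varepsilon_2.
\]
Phase I ensures $\|p(\cdot)\|_\infty\le 1$ on the delayed window, and the Phase I condition forces $c_4>c_1$ (since $\beta,|c_2|(1+\eta)\ge 0$), so the first two terms are dominated by $|c_2|$ and the right-hand side collapses to $|c_2|-c_3+\varepsilon_2<0$. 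The comparison-chain argument from Theorem \ref{const2normthm} then transfers verbatim and delivers $p(t)\equiv 0$ for $t\ge T_2=1/\varepsilon_2+T_1$.

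The main obstacle I expect is the non-smoothness of $\|p(t)\|_\infty$ at instants where the argmax switches between coordinates; I would handle this by working with upper right Dini derivatives and invoking the standard envelope fact $D^+\|p(t)\|_\infty\le \max_{i\in I(t)}D^+|p_i(t)|$, where $I(t)=\{i:|p_i(t)|=\|p(t)\|_\infty\}$. Since every $i^\star\in I(t)$ produces the same final inequality, the max-value comparison argument is insensitive to which active index is chosen, and the rest of the proof is a coordinate translation of Theorem \ref{const2normthm}.
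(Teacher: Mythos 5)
Your proposal is correct and follows essentially the same route as the paper's Appendix C proof: the same Lyapunov functions $\mu(t)\|p(t)\|_\infty$ and $\|p(t)\|_\infty+\varepsilon_2 t$, the same maximum-value comparison dichotomy, the same single-active-index bound that eliminates the factor $m$ and the Young-inequality parameter $\varepsilon_1$, and the same settling-time estimate $T_2=1/\varepsilon_2+T_1$. Your explicit treatment of the argmax switching via upper Dini derivatives is a small point of extra rigor that the paper glosses over, but it does not change the argument.
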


\begin{thm}\label{athmnormwq}
For (\ref{adaptivethm}) with the adaptive rules (\ref{dc3}) and
\begin{align*}
\dot{c}_4(t)=\left\{
\begin{array}{ll}
d_2\mu(t)\|p(t)\|_{\infty},&~\mathrm{if}~\sup\limits_{t-\pi(t)\le s\le t}\|p(s)\|_{\infty}>1\\
{d_3}\|p(t)\|_{\infty},&~\mathrm{if}~\sup\limits_{t-\pi(t)\le s\le t}\|p(s)\|_{\infty}\le 1
\end{array}
\right.
\end{align*}
FnTSta can be realized.
\end{thm}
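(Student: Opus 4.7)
The plan is to parallel the two-phase adaptive proof of Theorem \ref{athm}, replacing the 2-norm ingredients $p(t)^Tp(t)$ and $(p(t)^Tp(t))^{1/2}$ by $\|p(t)\|_\infty$, in exactly the way Theorem \ref{const2normthm} was weakened to obtain Theorem \ref{wqnormthm}. The key advantage of working with the $\infty$-norm is that at a point where the maximum coordinate is attained at some index $i$, the one-sided Dini derivative of $\|p(t)\|_\infty$ is $\mathrm{sign}(p_i(t))\dot p_i(t)$, so no Young-type splitting and no auxiliary $\varepsilon_1$ is needed for the delay cross term.

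For \textbf{Phase I}, fix $c_4^{\star}>0$ large enough that $\beta+(c_1-c_4^{\star})+|c_2|(1+\eta)<0$ and define
\begin{align*}
V_3^\infty(t) = \mu(t)\|p(t)\|_\infty + \tfrac{1}{d_2}(c_4(t)-c_4^{\star})^2,\quad W_3^\infty(t)=\sup_{t-\pi(t)\le s\le t}V_3^\infty(s).
\end{align*}
I would then use the same Case~1 / Case~2 dichotomy on $c_4(t)$ as in Theorem \ref{athm}: either $c_4(t)$ eventually exceeds $c_4^{\star}$, in which case Theorem \ref{wqnormthm} immediately closes Phase~I, or $c_4(t)\le c_4^{\star}$ throughout so that $(c_4(t)-c_4^{\star})^2$ is nonincreasing and $V_1(t):=\mu(t)\|p(t)\|_\infty$ must attain the maximum of $W_3^\infty$ over $[t-\pi(t),t]$ whenever $V_3^\infty(t)=W_3^\infty(t)$. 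At such a max point, with $i$ a max-norm index, the bound $|p_i(t-\pi_i(t))|\le V_1(t)/\mu(t-\pi(t))$ inherited from the envelope condition, together with the adaptive cancellation $(-c_4+2(c_4-c_4^{\star}))V_1\le -c_4^{\star}V_1$, yields
\begin{align*}
D^+ V_3^\infty(t)\le \Bigl[\frac{\dot\mu(t)}{\mu(t)}+c_1-c_4^{\star}+|c_2|\frac{\mu(t)}{\mu(t-\pi(t))}\Bigr]V_1(t)-c_3(t)\mu(t)<0
\end{align*}
for $t$ large. The standard envelope argument then produces a finite $T_3$ with $\sup_{t-\pi(t)\le s\le t}\|p(s)\|_\infty\le 1$ for all $t\ge T_3$.

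For \textbf{Phase II}, under the active adaptive laws $\dot c_3=d_1$ and $\dot c_4=d_3\|p(t)\|_\infty$, pick $c_3^{\star}>|c_2|$ and a small $\varepsilon_2^{\star}>0$ with $|c_2|-c_3^{\star}+\varepsilon_2^{\star}<0$, and define
\begin{align*}
V_4^\infty(t) = \|p(t)\|_\infty + \tfrac{1}{2d_1}(c_3(t)-c_3^{\star})^2 + \tfrac{1}{2d_3}(c_4(t)-c_4^{\star})^2 + \varepsilon_2^{\star}t.
\end{align*}
At a maximum point of the corresponding envelope $W_4^\infty$, and using $|p_i(t-\pi_i(t))|\le 1$ inherited from Phase~I, the adaptive derivatives $c_3(t)-c_3^{\star}$ and $(c_4(t)-c_4^{\star})\|p(t)\|_\infty$ cancel exactly the $-c_3(t)$ and $-c_4(t)\|p(t)\|_\infty$ coming from $\mathrm{sign}(p_i)\dot p_i$, giving
\begin{align*}
D^+ V_4^\infty(t)\le (c_1-c_4^{\star})\|p(t)\|_\infty + |c_2| - c_3^{\star} + \varepsilon_2^{\star} < 0.
\end{align*}
Hence $V_4^\infty(t)\le W_4^\infty(T_3)$ forces the linear envelope $\|p(t)\|_\infty\le 1+(c_3^{\star})^2/(2d_1)+(c_4^{\star})^2/(2d_3)-\varepsilon_2^{\star}(t-T_3)$ and settling time $T_4=T_3+(\varepsilon_2^{\star})^{-1}[1+(c_3^{\star})^2/(2d_1)+(c_4^{\star})^2/(2d_3)]$.

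The main obstacle I anticipate is the nondifferentiability of $\|p(t)\|_\infty$, since the maximizing index $i(t)$ can switch in time; the envelope technique embodied in $W_3^\infty$ and $W_4^\infty$ (already used in Theorem \ref{athm}) sidesteps this cleanly, because the Dini-derivative estimate is only needed at times where the envelope maximum is attained, and any switch in $i(t)$ away from such times only decreases $\|p\|_\infty$. The other subtle point is that the Case~1 / Case~2 dichotomy for $c_4$ relies solely on monotonicity and boundedness of $c_4$, both of which persist under the new $\infty$-norm adaptive rule, so the argument transfers to this setting without new ideas.
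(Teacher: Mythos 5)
Your proposal is correct and follows essentially the same route as the paper: Appendix D uses exactly the same two Lyapunov functions (Phase I $\mu(t)\|p(t)\|_{\infty}+\tfrac{1}{2d_2}(c_4(t)-c_4^{\star})^2$, Phase II $\|p(t)\|_{\infty}+\tfrac{1}{2d_1}(c_3(t)-c_3^{\star})^2+\tfrac{1}{2d_3}(c_4(t)-c_4^{\star})^2+\varepsilon_2^{\star}t$), combined with the maximum-value envelope and the Case 1/Case 2 dichotomy on $c_4$ imported from Theorem \ref{athm} and the $\infty$-norm Dini-derivative computation of Theorem \ref{wqnormthm}. Your $\tfrac{1}{d_2}$ normalization of the Phase I adaptive term (versus the paper's $\tfrac{1}{2d_2}$) is an immaterial variant that still closes under the Case 2 bound $c_4(t)\le c_4^{\star}$.
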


All proofs can be found in the Appendixes.


\section{Finite time outer synchronization of complex networks with delays}\label{FnTSyn}
As an application of obtained FnTSta results, we will investigate FnTOSyn for delayed networks.
\subsection{Model description}
We adopt the simple drive-response coupled model, while the drive system with $N$ nodes can be described as
\begin{align}
\dot{x}_i(t)=&f(x_i(t))+\theta_1\sum_{j=1}^{N}a_{ij} x_j(t)\nonumber\\
&+\theta_2\sum_{j=1}^{N}b_{ij} g(x_j({t-\pi_{ij}(t))}) \label{masterSys}
\end{align}
where $x_i(t)=(x_{i1}(t),\cdots,x_{in}(t))^T\in R^n, i=1,\cdots,N$, $f(x_i(t))=(f_1(x_i(t)),\cdots,f_n(x_i(t)))^T$, $g(x_{j}(t-\pi_{ij}(t))=(g_1(x_{j}(t-\pi_{ij}(t)),\cdots,g_n(x_{j}(t-\pi_{ij}(t)))^T$. Asynchronous time delay $0\le \pi_{ij}(t)\le \pi(t)$ satisfies Assumption \ref{as1}.

The response system can be defined as
\begin{align}
\dot{y}_i(t)=&f(y_i(t))+\theta_1\sum_{j=1}^{N}a_{ij} y_j(t)\nonumber\\
&+\theta_2\sum_{j=1}^{N}b_{ij} g(y_j({t-\pi_{ij}(t))})+u_i(t) \label{slaveSys}
\end{align}
where $y_i(t)=(y_{i1}(t),\cdots,y_{in}(t))^T$, $u_i(t)$ is external control.

\begin{asu}\label{lip}
As for functions $f(\cdot)$ and $g(\cdot)$, suppose there exist positive constants $L_f, L_g$ such that, for any $x_i, y_i\in R^n$,
\begin{align*}
&\|f(y_i)-f(x_i)\|_2\le L_f\|y_i-x_i\|_2,\\
&\|g(y_i)-g(x_i)\|_2\le L_g\|y_i-x_i\|_{2}.
\end{align*}
\end{asu}

\begin{asu}\label{matrix}
As for the irreducible coupling matrix $A=(a_{ij})$, suppose it is a Metzler matrix with zero row sum.
\end{asu}

\begin{lem}\label{left}
(\cite{Chenliulu2007})
If matrix $A$ satisfies Assumption \ref{matrix}, then the new matrix $\tilde{A}=A-\mathrm{diag}(\sigma,0,\cdots,0)$ is Lyapunov stable, $\sigma>0$. Denote $\xi=(\xi_1,\cdots,\xi_N)^T$ is the left eigenvector corresponding to the zero eigenvalue of $A$, then $\xi_i>0, i=1,\cdots,N$. In the following, we always suppose that $\sum_{i=1}^N\xi_i=1$. Moreover, $\Xi\tilde{A}+\tilde{A}\Xi=2\{\Xi\tilde{A}\}^s<0$, where $\Xi=\mathrm{diag}(\xi)$, i.e., its largest eigenvalue $\lambda_{\max}(\{\Xi\tilde{A}\}^s)<0$.
\end{lem}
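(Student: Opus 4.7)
The plan is to handle the three claims in order: positivity of the left eigenvector $\xi$, negative definiteness of the symmetric part $\{\Xi\tilde{A}\}^s$, and finally Lyapunov stability of $\tilde{A}$, because the last follows immediately from the second once $\xi>0$ is known.

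First I would establish $\xi_i>0$ for all $i$ using Perron--Frobenius theory. Since $A$ is Metzler, for sufficiently large $c>0$ the matrix $A+cI$ is entrywise nonnegative, and it is irreducible because $A$ is. By the Perron--Frobenius theorem, the spectral radius $\rho(A+cI)=c$ (using $A\mathbf{1}=0$) is a simple eigenvalue, and the corresponding left eigenvector is strictly positive. Since $\xi^T(A+cI)=c\xi^T$ iff $\xi^T A=0$, the left null vector $\xi$ of $A$ is strictly positive. Normalize $\sum_i\xi_i=1$; then $\Xi=\mathrm{diag}(\xi)$ is positive definite.

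Next I would analyze the symmetric matrix $M:=\Xi A+A^T\Xi$. Entrywise $M_{ij}=\xi_i a_{ij}+\xi_j a_{ji}$, which is nonnegative for $i\ne j$ since $A$ is Metzler and $\xi>0$. Its row sums vanish: $\sum_j M_{ij}=\xi_i(A\mathbf{1})_i+(\xi^T A)_i=0$, using both $A\mathbf{1}=0$ and $\xi^T A=0$. Hence $-M$ has the structure of a weighted graph Laplacian on the same underlying connectivity pattern as $A$, so it is positive semidefinite with $M\mathbf{1}=0$. Irreducibility of $A$ transfers to $-M$, so $0$ is a simple eigenvalue of $M$ and all other eigenvalues are strictly negative. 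Equivalently, $v^T M v\le 0$ for all $v$, with equality only when $v$ is a scalar multiple of $\mathbf{1}$.

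Now I would handle the rank-one perturbation. Observe that
\begin{align*}
\Xi\tilde{A}+\tilde{A}^T\Xi = M - 2\sigma\xi_1 e_1 e_1^T,
\end{align*}
where $e_1$ is the first standard basis vector. For any nonzero $v\in\mathbb{R}^N$ write $v^T(\Xi\tilde{A}+\tilde{A}^T\Xi)v=v^T M v-2\sigma\xi_1 v_1^2$. If $v$ is not proportional to $\mathbf{1}$, the first term is strictly negative; if $v=\alpha\mathbf{1}$ with $\alpha\ne 0$, the first term vanishes but $v_1=\alpha\ne 0$, so the second term contributes $-2\sigma\xi_1\alpha^2<0$. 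Either way the quadratic form is strictly negative, giving $\{\Xi\tilde{A}\}^s<0$.

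Finally, Lyapunov stability of $\tilde A$ follows from the Lyapunov equation interpretation: $\Xi>0$ and $\Xi\tilde A+\tilde A^T\Xi<0$ together imply every eigenvalue of $\tilde A$ has negative real part (via $V(x)=x^T\Xi x$ and $\dot V=x^T(\Xi\tilde A+\tilde A^T\Xi)x<0$ along $\dot x=\tilde A x$). The main subtlety I anticipate is the rank-one perturbation step: one must argue that the perturbation strictly removes the zero eigenvalue of $M$, and this works only because the null direction $\mathbf{1}$ of $M$ has nonzero first coordinate, which in turn depends on the pinning being applied at a node (the first one) rather than at a direction orthogonal to $\mathbf{1}$.
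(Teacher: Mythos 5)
Your proof is correct. The paper gives no proof of this lemma at all---it is imported directly from \cite{Chenliulu2007}---and your argument (Perron--Frobenius applied to $A+cI$ to get $\xi>0$, the observation that $\Xi A+A^T\Xi$ is a negative semidefinite weighted Laplacian with kernel spanned by $\mathbf{1}$, and the rank-one pinning perturbation $-2\sigma\xi_1 e_1e_1^T$ eliminating that null direction because $\mathbf{1}$ has nonzero first coordinate) is essentially the standard proof given in that reference; note only that the statement's expression $\Xi\tilde{A}+\tilde{A}\Xi$ is a typo for $\Xi\tilde{A}+\tilde{A}^T\Xi$, which is what you correctly analyze.
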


\begin{mydef}
Systems (\ref{masterSys}) and (\ref{slaveSys}) are said to achieve the FnTOSyn, if for the synchronization error $e_i(t)=y_i(t)-x_i(t), i=1,\cdots N$, there exists a time point $T^{\star}$ such that
\begin{align*}
\lim\limits_{t\rightarrow T^{\star}}\|e_i(t)\|=0, ~~\mathrm{and}~~ \|e_i(t)\|\equiv 0,~ t\ge T^{\star}.
\end{align*}
\end{mydef}

Denote $\tilde{f}(e_i(t))=f(y_i(t))-f(x_i(t))$ and $\tilde{g}(e_i(t))=g(y_i(t))-g(x_i(t))$, then the dynamic model for error $e_i(t)$ can be described as:
\begin{align}\label{errorSys}
\dot{e}_i(t)=&\tilde{f}(e_i(t))+\theta_1\sum_{j=1}^{N}a_{ij} e_j(t)\nonumber\\
&+\theta_2\sum_{j=1}^{N}b_{ij}\tilde{g}(e_j(t-\pi_{ij}(t)))+u_i
\end{align}

Now, our aim is to design suitable external control $u_i(t)$ and prove its validity to realize FnTOSyn. Of course, the external control should be simple and easy to use. Under this standard, $u_i(t)$ should be only dependent on the current system state $e_i(t)$; moreover, pinning control strategy should also be considered in order to reduce control cost.

\subsection{FnTOSyn using 2PM}
In this subsection, we first design the form of $u_i(t)$ by the pinning control technique, then FnTOSyn under this control will also be proved using the proposed 2PM.

Design the controller $u_i(t)$ as
\begin{align}\label{control}
u_i(t)=\left\{
\begin{array}{ll}
-\theta_1\sigma e_1(t)-\theta_3\mathrm{sgn}(e_{1}(t)), &i=1\\
-\theta_3\mathrm{sgn}(e_{i}(t)), &\mathrm{otherwise}
\end{array}
\right.
\end{align}
where $\sigma>0$ and $\theta_3>0$.

\begin{thm}\label{first}
For the system (\ref{errorSys}) with control (\ref{control}), if there exists a constant $\varepsilon_1>0$, such that following conditions hold:
\begin{align}\label{complexc1}
&\beta+2L_f+\theta_2\max_{i,j}|b_{ij}|N\varepsilon_1+2\theta_1\lambda_{\max}(\{\Xi\tilde{A}\}^s)\nonumber\\
&~~+\theta_2\max_{i,j}|b_{ij}|N^2nL_g^2\varepsilon_1^{-1}\frac{1}{\min_i\xi_i}(1+\eta)<0
\end{align}
and
\begin{align}\label{complexc2}
\theta_2\max_{ij}|b_{ij}|NL_g-\theta_3<0,
\end{align}
then (\ref{errorSys}) will achieve FnTOSyn.
\end{thm}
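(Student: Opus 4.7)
The plan is to mirror the two-phases-method used in Theorem \ref{const2normthm}, but with a weighted Lyapunov function tailored to the pinning-controlled coupling structure, so that Lemma \ref{left} can absorb the coupling term. Throughout I will work with the error system (\ref{errorSys}) after substituting the controller (\ref{control}); the pinning term $-\theta_1\sigma e_1(t)$ has the effect of replacing $A$ by $\tilde{A} = A - \mathrm{diag}(\sigma,0,\dots,0)$, which by Lemma \ref{left} satisfies $\lambda_{\max}(\{\Xi\tilde{A}\}^s)<0$.

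For Phase I, I will take the weighted Lyapunov function
\begin{align*}
V_1(t) = \mu(t)\sum_{i=1}^N \xi_i\, e_i(t)^T e_i(t),
\end{align*}
together with the maximum-value surrogate $W_1(t)=\sup_{t-\pi(t)\le s\le t} V_1(s)$, and differentiate at a contact point $V_1(t_1)=W_1(t_1)$, exactly as in the proof of Theorem \ref{const2normthm}. The term from $\dot{\mu}$ yields $\beta V_1$; the Lipschitz bound on $\tilde{f}$ gives $2L_f V_1$; the coupling term collapses componentwise to $2\theta_1\sum_{k=1}^n e^{(k)T}\{\Xi\tilde{A}\}^s e^{(k)}\le 2\theta_1\lambda_{\max}(\{\Xi\tilde{A}\}^s)\sum_i\|e_i\|^2$, which, since the coefficient is negative and $\sum_i\xi_i=1$, can be upper bounded by $2\theta_1\lambda_{\max}(\{\Xi\tilde{A}\}^s) V_1$. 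The delay term is handled by Young's inequality componentwise, producing the $\theta_2\max|b_{ij}|N\varepsilon_1 V_1$ piece and, via the factor-of-$n$ Lipschitz splitting on each coordinate and the $1/\min_i\xi_i$ coming from reinjecting $\|e_j(t-\pi_{ij})\|^2$ back into a weighted sum, the delayed term $\theta_2\max|b_{ij}|N^2 n L_g^2\varepsilon_1^{-1}(\min_i\xi_i)^{-1}(\mu(t)/\mu(t-\pi(t))) V_1(t)$. Invoking Assumption \ref{as2} and (\ref{complexc1}), $\dot{V}_1<0$ at $t_1$, so $W_1$ is nonincreasing and the standard argument from Theorem \ref{const2normthm} (using $\mu(t-\pi(t))\to\infty$) shows that $\sup_{t-\pi(t)\le s\le t}\sum_i\xi_i\|e_i(s)\|^2\le 1$ after a finite time $T_1$.

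For Phase II, I will pick $\varepsilon_2>0$ with $\theta_2\max|b_{ij}|NL_g - \theta_3 + \varepsilon_2 < 0$ by (\ref{complexc2}) and use
\begin{align*}
V_2(t) = \Bigl(\sum_{i=1}^N e_i(t)^T e_i(t)\Bigr)^{1/2} + \varepsilon_2 t,
\end{align*}
with $W_2(t)=\sup_{t-\pi(t)\le s\le t}V_2(s)$. At a contact point, differentiating and using Cauchy--Schwarz together with the skew-symmetrization that eliminates the coupling (either via $\sum_i e_i^T \sum_j a_{ij} e_j \le 0$ for a weakly diagonally dominant $\tilde A$ in the appropriate norm, or simply by absorbing it into $L_f$-type bounds since we only need a crude estimate here), the $\mathrm{sgn}$ term contributes $-\theta_3\sum_i\|e_i\|_1/\|e\|_2$, and the delayed coupling together with the Lipschitz bound on $\tilde g$ and $\|e_j(t-\pi_{ij})\|\le 1$ contributes at most $\theta_2\max|b_{ij}| N L_g\,\|e\|_1/\|e\|_2$. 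The ratio $\|e\|_1/\|e\|_2\ge 1$ (with equality at a coordinate-axis configuration), so the combined bound becomes $\dot{V}_2\le \theta_2\max|b_{ij}|NL_g-\theta_3+\varepsilon_2<0$. Then $V_2(t)\le V_2(T_1)\le 1+\varepsilon_2 T_1$ for all $t\ge T_1$, forcing $\|e(t)\|_2\equiv 0$ by the time $T^\star = T_1 + 1/\varepsilon_2$.

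The main technical obstacle will be the bookkeeping in Phase I: the coupling contribution must be converted from $\sum_i\|e_i\|^2$ back into the $\xi$-weighted $V_1$ without losing the sign of $\lambda_{\max}(\{\Xi\tilde{A}\}^s)$, and the delayed term, after the $n$-fold component splitting and maximum-value reinjection, must carry exactly the $N^2 n L_g^2\varepsilon_1^{-1}(\min_i\xi_i)^{-1}$ factor appearing in (\ref{complexc1}). Once those two estimates are in place, the rest of Phase I is a direct copy of the scalar argument; Phase II is even more routine because the inequality $\|e\|_2\le\|e\|_1$ replaces the $\sqrt{m}$-step of Theorem \ref{athm}, and the remaining terms are already dominated by the sign control once the error is smaller than $1$.
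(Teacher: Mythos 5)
Your outline follows the paper's proof almost step for step: Phase I with $\overline{V}_1(t)=\mu(t)\sum_i\xi_i e_i(t)^Te_i(t)$ and the maximum-value function, Phase II with a square-root Lyapunov function augmented by $\varepsilon_2 t$, and the same bookkeeping producing the constants in (\ref{complexc1}). The one place where you deviate is also the one place where your argument has a genuine hole: in Phase II you drop the weights $\xi_i$ and take $V_2(t)=\big(\sum_i e_i^Te_i\big)^{1/2}+\varepsilon_2 t$. With the unweighted sum, the undelayed coupling contribution is $\theta_1 E^T(\{\tilde A\}^s\otimes I_n)E$, and Lemma \ref{left} gives you nothing about $\{\tilde A\}^s$ --- only $\{\Xi\tilde A\}^s$ is guaranteed negative definite. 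Your two proposed escapes do not work as stated: $\tilde A$ need not be weakly diagonally dominant in a way that makes $\sum_i e_i^T\sum_j\tilde a_{ij}e_j\le 0$, and ``absorbing it into $L_f$-type bounds'' leaves a positive constant of the form $(L_f+\theta_1\|\tilde A\|)$ in $\dot V_2$ that condition (\ref{complexc2}) does not dominate.

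The fix is simply to keep the $\xi$-weighted norm in Phase II, i.e.\ $V_2(t)=\big(\sum_i\xi_i e_i^Te_i\big)^{1/2}+\varepsilon_2 t$, exactly as the paper does. Then the non-delayed terms contribute $\big(L_f+\theta_1\lambda_{\max}(\{\Xi\tilde A\}^s)\big)\big(\sum_i\xi_i\|e_i\|^2\big)^{1/2}$, and this can be discarded because condition (\ref{complexc1}) forces $L_f+\theta_1\lambda_{\max}(\{\Xi\tilde A\}^s)<0$ (all the remaining terms on its left-hand side, including $\beta\ge 0$ since $\mu$ is nondecreasing, are nonnegative). The delayed coupling and the $\mathrm{sgn}$ term then combine, via $\|e_j(t-\pi_{ij})\|\le\xi_j^{-1/2}$ from the Phase I bound and $\|\cdot\|_2\le\|\cdot\|_1$, into $(\theta_2\max_{ij}|b_{ij}|NL_g-\theta_3)$ times a ratio that is at least $1$, which is exactly where (\ref{complexc2}) enters. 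With that repair your argument coincides with the paper's.
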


\begin{proof}
With the same argument as that in Theorem \ref{const2normthm}, the time is considered from time $T$, and according to condition (\ref{complexc1}), we have
\begin{align*}
&\frac{\dot{\mu}(t)}{\mu(t)}+2L_f+\theta_2\max_{i,j}|b_{ij}|N\varepsilon_1+2\theta_1\lambda_{\max}(\{\Xi\tilde{A}\}^s)\nonumber\\
&+\theta_2\max_{i,j}|b_{ij}|N^2nL_g^2\varepsilon_1^{-1}\frac{1}{\min_i\xi_i}\frac{{\mu(t)}}{\mu(t-\pi(t))}<0.
\end{align*}
If $\sup_{t-\pi(t)\le s\le t}\sum_{i=1}^{N}e_i^T(s)e_i(s)\le 1, t\ge T$, then we will switch to Phase II directly; else, we will start from Phase I.

{\bf Phase I}: Define
\begin{align}\label{ov1}
\overline{V}_1(t)=\mu(t)\sum_{i=1}^{N}\xi_{i}e_i(t)^Te_i(t),~~ t\ge T.
\end{align}
and
\begin{align}\label{ow1}
\overline{W}_1(t)=\sup_{t-\pi(t)\le s \le t}\overline{V}_1(s)
\end{align}
If at some time $t_1$, $\overline{V}_1(t_1)=\overline{W}_1(t_1)$, differentiating $\overline{V}_1(t)$ along (\ref{errorSys}),
\begin{align}
&\dot{\overline{V}}_1(t)=\dot{\mu}(t)\sum_{i=1}^{N}\xi_{i}e_i(t)^Te_i(t)\nonumber\\
&+2\mu(t)\sum_{i=1}^{N}\xi_{i}e_i(t)^T[\tilde{f}(e_i(t))+\theta_1\sum_{j=1}^{N}a_{ij} e_j(t)\nonumber\\
&+\theta_2\sum_{j=1}^{N}b_{ij}\tilde{g}(e_j(t-\pi_{ij}(t)))+u_i]\nonumber\\
\le&\frac{\dot{\mu}(t)}{\mu(t)}\overline{V}_1(t)+2L_f\mu(t)\sum_{i=1}^{N}\xi_{i}e_i(t)^Te_i(t)\nonumber\\
&+2\mu(t)\theta_1E(t)^T(\{\Xi\tilde{A}\}^s\otimes I_n)E(t)\nonumber\\
&+2\mu(t)\theta_2\sum_{i=1}^N\sum_{j=1}^N\xi_ie_i(t)^Tb_{ij}\tilde{g}(e_j(t-\pi_{ij}(t)))\nonumber\\
\le&\big(\frac{\dot{\mu}(t)}{\mu(t)}+2L_f+2\theta_1\lambda_{\max}(\{\Xi\tilde{A}\}^s)\big)\overline{V}_1(t)\nonumber\\
&+\mu(t)\theta_2\max_{i,j}|b_{ij}|\sum_{i=1}^N\sum_{j=1}^N\big[\varepsilon_1\xi_ie_i(t)^Te_i(t)\nonumber\\
&+\varepsilon_1^{-1}\xi_i\tilde{g}(e_j(t-\pi_{ij}(t)))^T\tilde{g}(e_j(t-\pi_{ij}(t)))\big]\nonumber\\
\le&\big[\frac{\dot{\mu}(t)}{\mu(t)}+2L_f+\theta_2\max_{i,j}|b_{ij}|N\varepsilon_1+2\theta_1\lambda_{\max}(\{\Xi\tilde{A}\}^s)\big]\overline{V}_1(t)\nonumber\\
+&\theta_2\max_{i,j}|b_{ij}|NL_g^2\varepsilon_1^{-1}\frac{1}{\min_i\xi_i}{\mu(t)}\sum_{i,j}\overline{V}_1(t-\pi_{ij}(t))\nonumber\\
\le&\bigg[\frac{\dot{\mu}(t)}{\mu(t)}+2L_f+\theta_2\max_{i,j}|b_{ij}|N\varepsilon_1+2\theta_1\lambda_{\max}(\{\Xi\tilde{A}\}^s)\nonumber\\
&+\theta_2\max_{i,j}|b_{ij}|N^2nL_g^2\varepsilon_1^{-1}\frac{1}{\min_i\xi_i}\frac{{\mu(t)}}{\mu(t-\pi(t))}\bigg]\overline{V}_1(t)\nonumber\\
<&0\nonumber
\end{align}
where $E(t)=(e_1(t)^T, \cdots, e_N(t)^T)^T$.

According to the same arguments in Theorem \ref{const2normthm}, system (\ref{errorSys}) would decrease and across the hyperplane with $\sup_{t-\pi(t) \leq s \leq t}\sum_{i=1}^{N}e_i(t)^Te_i(t)=1$ in FnT $T_1(\ge T)$.

{\bf Phase II}: From condition (\ref{complexc2}), we can find a sufficiently small constant $\varepsilon_2>0$, such that $(\theta_2\max_{ij}|b_{ij}|Nd_2-\theta_3)+\varepsilon_2<0$. Using this $\varepsilon_2$, we define another Lyapunov function
\begin{align}\label{ov2}
\overline{V}_2(t)=\big(\sum_{i=1}^{N}\xi_{i}e_i(t)^Te_i(t)\big)^{\frac{1}{2}}+\varepsilon_2t,~~ t\ge T_1
\end{align}
and the maximum-value function
\begin{align}\label{ow2}
\overline{W}_2(t)=\sup_{t-\pi(t)\le s\le t}\overline{V}_2(s),~~ t\ge T_1
\end{align}
If at some time $t_2$, $\overline{V}_2(t_2)=\overline{W}_2(t_2)$, differentiating $\overline{V}_2(t)$,
\begin{align}
&\dot{\overline{V}}_2(t)=\big(\sum_{i=1}^{N}\xi_{i}e_i(t)^Te_i(t)\big)^{-\frac{1}{2}}\sum_{i=1}^{N}\xi_{i}e_i(t)^T\dot{e}_i(t)+\varepsilon_2\nonumber\\
\le&\varepsilon_2+\big(\sum_{i=1}^{N}\xi_{i}e_i(t)^Te_i(t)\big)^{-\frac{1}{2}}\cdot\nonumber\\
&\bigg[(L_f+\theta_1\lambda_{\max}(\{\Xi\tilde{A}\}^s))\sum_{i=1}^{N}\xi_{i}e_i(t)^Te_i(t)\nonumber\\
&+\theta_2\sum_{i=1}^N\sum_{j=1}^N\xi_ie_i(t)^Tb_{ij}\tilde{g}(e_j(t-\pi_{ij}(t)))\nonumber\\
&-\theta_3\sum_{i=1}^N\xi_ie_i(t)^T\mathrm{sgn}(e_i(t))\bigg]\nonumber\\
\le&\varepsilon_2+\big(\sum_{i=1}^{N}\xi_{i}e_i(t)^Te_i(t)\big)^{-\frac{1}{2}}\cdot\nonumber\\
&(\theta_2\max_{ij}|b_{ij}|NL_g-\theta_3)\sum_{i=1}^N\sqrt{\xi_i}|e_i(t)|^T{\bf 1}\nonumber\\
\le&\varepsilon_2+(\theta_2\max_{ij}|b_{ij}|NL_g-\theta_3)<0\nonumber
\end{align}

With the similar arguments in Theorem \ref{const2normthm}, the error $\big(\sum_{i=1}^{N}\xi_{i}e_i(t)^Te_i(t)\big)^{\frac{1}{2}}$ will flow from $1$ to $0$ in FnT. Especially, $e_i(t)\equiv 0$ for any $t\ge T_2={1}/{\varepsilon_2}+T_1$.
\end{proof}

\begin{rem}
From the above theorem, we can find that the enlargement of inequalities can be improved by discussing more carefully. For example, the property of matrix $B=(b_{ij})$ (Metzler or not), the type of time delays $\pi_{ij}(t)$ (synchronous or asynchronous, bounded or unbounded), the definition of function $g(\cdot)$ (linear or nonlinear), and even other different norms of $e_i(t)$ ($1$-norm or $\infty$-norm), etc. All these can be done following our proposed 2PM, which has a wide scope of applications for FnT analysis of delayed systems.
\end{rem}

Next, we design another controller added on each node,
\begin{align}\label{control2}
{u}_i(t)=
-\theta_3\mathrm{sgn}(e_{i}(t))-\theta_4e_i(t),
\end{align}
where $\theta_3>0$ and $\theta_4>0$, and $i=1,2,\cdots,N$.

\begin{thm}\label{second}
For the system (\ref{errorSys}) with control (\ref{control2}), if there exists a constant $\varepsilon_1>0$, such that (\ref{complexc2}) and
\begin{align}\label{complexc3}
&\beta+2L_f+\theta_2\max_{i,j}|b_{ij}|N\varepsilon_1+2\lambda_{\max}(\{\Xi\hat{A}\}^s)\nonumber\\
&~~+\theta_2\max_{i,j}|b_{ij}|N^2nL_g^2\varepsilon_1^{-1}\frac{1}{\min_i\xi_i}(1+\eta)<0,
\end{align}
hold, where $\hat{A}=\theta_1A+\theta_4I$, then (\ref{errorSys}) will achieve FnTOSyn.
\end{thm}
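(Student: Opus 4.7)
The plan is to mirror the two-phase argument of Theorem \ref{first} almost verbatim, with one structural change: the pinning term $-\theta_1\sigma e_1(t)$ used only on node~1 is replaced by the uniform feedback $-\theta_4 e_i(t)$ applied to every node, and this extra linear piece gets absorbed into the coupling matrix via the definition $\hat{A}=\theta_1 A+\theta_4 I$ (so that the symmetrized weighted matrix $\{\Xi\hat{A}\}^s$ appears in the estimate in place of $\{\Xi\tilde{A}\}^s$). The sign-term $-\theta_3\,\mathrm{sgn}(e_i(t))$ is unchanged, so Phase~II will be essentially identical.

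For Phase~I I would reuse the Lyapunov functional $\overline{V}_1(t)=\mu(t)\sum_{i=1}^N \xi_i e_i(t)^T e_i(t)$ and maximum-value function $\overline{W}_1(t)=\sup_{t-\pi(t)\le s\le t}\overline{V}_1(s)$ from (\ref{ov1})–(\ref{ow1}). At a time $t_1$ where $\overline{V}_1(t_1)=\overline{W}_1(t_1)$, differentiate along (\ref{errorSys}) with the new $u_i(t)$. The only new contribution relative to Theorem~\ref{first} is $-2\theta_4\mu(t)\sum_i\xi_i e_i^T e_i$, which combines with the quadratic form $2\theta_1\mu(t)E^T(\{\Xi A\}^s\otimes I_n)E$ to produce $2\mu(t)E^T(\{\Xi\hat{A}\}^s\otimes I_n)E\le 2\mu(t)\lambda_{\max}(\{\Xi\hat{A}\}^s)\overline{V}_1(t)$. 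The Lipschitz bound on $\tilde f$, the Young-inequality splitting of the delayed cross term with parameter $\varepsilon_1$, and the bound $\tilde g(e_j(t-\pi_{ij}(t)))^T\tilde g(\cdot)\le L_g^2\|e_j(t-\pi_{ij}(t))\|_2^2$ are all unchanged. Using $\xi_j\le 1$ and $\mu(t-\pi_{ij}(t))\ge\mu(t-\pi(t))$ exactly as in the ``mafan1--mafan2'' estimate of Theorem~\ref{const2normthm}, condition (\ref{complexc3}) forces $\dot{\overline{V}}_1(t_1)<0$. The standard maximum-value/$\mu$-function argument then gives a finite time $T_1\ge T$ at which $\sup_{t-\pi(t)\le s\le t}\sum_i\xi_i e_i(s)^T e_i(s)\le 1$.

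For Phase~II I would reuse (\ref{ov2})–(\ref{ow2}). At a contact time $t_2$, differentiation of $\overline{V}_2(t)$ along (\ref{errorSys}) produces: a quadratic term from $\tilde f$ and from the coupling (now majorized by $(L_f+\lambda_{\max}(\{\Xi\hat{A}\}^s))\sum\xi_i\|e_i\|_2^2$), the delayed $\tilde g$ term bounded by $\theta_2\max|b_{ij}|NL_g\sum\sqrt{\xi_i}|e_i|^T\mathbf 1$, the sign-term contribution $-\theta_3\sum\xi_i e_i^T\mathrm{sgn}(e_i)=-\theta_3\sum\xi_i|e_i|^T\mathbf 1$, and the constant $+\varepsilon_2$. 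After dividing by $(\sum\xi_i\|e_i\|_2^2)^{1/2}$, the quadratic piece is harmless (it contributes at most $(L_f+\lambda_{\max}(\{\Xi\hat{A}\}^s))(\sum\xi_i\|e_i\|_2^2)^{1/2}$, which is bounded on the region $\sum\xi_i\|e_i\|_2^2\le 1$ and can be absorbed by taking $\varepsilon_2$ slightly smaller), and what remains is $(\theta_2\max|b_{ij}|NL_g-\theta_3)+\varepsilon_2<0$ by (\ref{complexc2}). This yields $\overline{V}_2(t)\le 1+\varepsilon_2 T_1$, hence $e_i(t)\equiv 0$ for $t\ge T_2=1/\varepsilon_2+T_1$.

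The main obstacle is the Phase~II quadratic leftover: in Theorem~\ref{first} the pinning structure guarantees $\lambda_{\max}(\{\Xi\tilde{A}\}^s)<0$ so the corresponding quadratic piece is a free bonus, but here $\{\Xi\hat{A}\}^s$ need not be negative on its own; one must verify that the quadratic contribution to $\dot{\overline{V}}_2$ can be dominated by the sign-term surplus inside the unit ball, perhaps by slightly shrinking $\varepsilon_2$ or by combining it with the $L_f$ term under a single constant before checking (\ref{complexc2}). Apart from this technicality (which is handled exactly as in the Remark following Theorem~\ref{const2normthm} regarding the flexibility of the Phase~II Lyapunov function), the proof is a direct transcription of Theorem~\ref{first}.
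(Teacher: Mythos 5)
Your proposal is correct and is exactly the route the paper takes: the paper's entire ``proof'' of this theorem is the single sentence that it is similar to Theorem \ref{first}, and your transcription (absorbing the uniform feedback $-\theta_4 e_i$ into the coupling matrix so that $\{\Xi\hat{A}\}^s$ replaces $\{\Xi\tilde{A}\}^s$ in Phase~I, with Phase~II unchanged) is precisely that argument spelled out. One small remark: for your own algebra to close, $\hat{A}$ must be $\theta_1 A-\theta_4 I$ (since $-2\theta_4\mu(t)\sum_i\xi_i e_i^Te_i=2\mu(t)E^T(-\theta_4\Xi\otimes I_n)E$), so the ``$+$'' in the theorem's definition of $\hat{A}$ is evidently a sign slip in the paper, and the Phase~II quadratic leftover you flag is present (and silently dropped) in the paper's own proof of Theorem \ref{first} as well, so your extra care there only improves on the original.
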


Its proof is similar to that in Theorem \ref{first}.

Next, we discuss classical type of synchronization, i.e., states of each nodes in the network are the same, here we call it finite time inner synchronization (FnTISyn), which is also the FnTSyn in the introduction.
\subsection{FnTISyn using 2PM}
In this case, synchronization trajectory $\phi(t)$ is a solution of an isolated node of (\ref{masterSys}),
\begin{align}
\dot{\phi}(t)=f(\phi(t)), \label{innerSys}
\end{align}
where $\phi(t)$ can be a fixed point, a chaotic orbit, or others.

If $x_1(t)=\cdots=x_N(t)=\phi(t)$ is a solution for network (\ref{masterSys}), then the following condition should hold:
\begin{align}\label{should}
\sum_{i=1}^Nb_{ij}g(\phi(t-\pi_{ij}(t)))=0
\end{align}

In the literature of classical synchronization, condition (\ref{should}) generally means the coupling relationships between node $i$ and its neighbours, so its form can be transformed as: $\sum_{i=1}^Nb_{ij}g(x_j(t-\pi_{ij}(t)))=\sum_{i=1}^Nb_{ij}[g(x_j(t-\pi_{ij}(t)))-g(x_i(t-\pi_{ij}(t)))]$. Moreover, if network (\ref{masterSys}) itself can realize FnTISyn, then the outer synchronization can also convert into inner synchronization. As for the FnTISyn for networks with delay and without control, we will address this problem in other papers, here we omit it.

Now, consider the network (\ref{slaveSys}) with external control, and define the error as $e_i(t)=y_i(t)-\phi(t)$, then its dynamics can be described as
\begin{align}
\dot{e}_i(t)&=
f(y_i(t))-f(\phi(t))+\theta_1\sum_{j=1}^{N}a_{ij} e_j(t)\label{inner}\\
+&\theta_2\sum_{j=1}^{N}b_{ij}\big[g(y_j(t-\pi_{ij}(t)))-g(\phi(t-\pi_{ij}(t)))\big]+u_i(t)\nonumber
\end{align}
where $u_i(t)$ can be defined in (\ref{control}) (or (\ref{control2})). Then, as a direct result of Theorem \ref{first} (or Theorem \ref{second}), we have
\begin{col}
If conditions (\ref{complexc1}) and (\ref{complexc2}) hold, then FnTISyn can be achieved for network (\ref{slaveSys}).
\end{col}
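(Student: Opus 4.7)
The plan is to reduce the corollary to Theorem \ref{first} by observing that the inner-synchronization error system (\ref{inner}) is structurally identical to the outer-synchronization error system (\ref{errorSys}) once the isolated trajectory $\phi(t)$ plays the role of the drive state $x_i(t)$. First I would introduce the shorthand $\tilde{f}(e_i(t))=f(y_i(t))-f(\phi(t))$ and $\tilde{g}(e_j(t-\pi_{ij}(t)))=g(y_j(t-\pi_{ij}(t)))-g(\phi(t-\pi_{ij}(t)))$, and invoke Assumption \ref{lip} to inherit the Lipschitz bounds $\|\tilde{f}(e_i)\|_2\le L_f\|e_i\|_2$ and $\|\tilde{g}(e_j)\|_2\le L_g\|e_j\|_2$ already exploited in the proof of Theorem \ref{first}.

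Next I would verify that the two coupling terms collapse correctly. For the linear coupling, Assumption \ref{matrix} yields $\sum_j a_{ij}\phi(t)=0$, so $\theta_1\sum_j a_{ij}y_j(t)=\theta_1\sum_j a_{ij}e_j(t)$ drops out automatically. For the delayed nonlinear coupling, condition (\ref{should}) provides $\sum_j b_{ij}g(\phi(t-\pi_{ij}(t)))=0$, and subtracting this zero term converts the corresponding term in (\ref{slaveSys}) into $\theta_2\sum_j b_{ij}\tilde{g}(e_j(t-\pi_{ij}(t)))$. Thus the dynamics in (\ref{inner}) coincides term-by-term with (\ref{errorSys}). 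With the pinning control (\ref{control}) substituted for $u_i(t)$, I can then quote Theorem \ref{first} as a black box: hypotheses (\ref{complexc1})--(\ref{complexc2}) are exactly its assumptions, and its two-phase Lyapunov argument built on $\overline{V}_1$ in (\ref{ov1}) and $\overline{V}_2$ in (\ref{ov2}) produces $e_i(t)\equiv 0$ for all $t\ge T_2=1/\varepsilon_2+T_1$. The analogous conclusion under controller (\ref{control2}) follows in the same way from Theorem \ref{second}.

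The main obstacle is not analytical but structural: one must confirm that the chosen synchronous trajectory $\phi(t)$ is actually admissible for the drive network, i.e., that (\ref{should}) really holds for this $\phi(t)$. As the discussion preceding the corollary notes, this is typically secured either by assuming that $B$ has zero row sum (so $g(\phi)$ sums out) or by rewriting the delayed coupling in a difference form that vanishes on the synchronous manifold. Once this admissibility premise is granted, the corollary is an immediate transcription of Theorem \ref{first} (respectively Theorem \ref{second}) to the inner-synchronization error equations, and no new estimates, new Lyapunov functionals, or fresh settling-time computations are required.
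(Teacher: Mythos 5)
Your reduction is exactly the paper's intended argument: the paper states this corollary as a direct consequence of Theorem \ref{first}, relying on the zero row sum of $A$ and condition (\ref{should}) to make the error system (\ref{inner}) structurally identical to (\ref{errorSys}), after which the two-phase proof applies verbatim. Your additional care in flagging the admissibility of $\phi(t)$ via (\ref{should}) matches the paper's own preceding discussion, so no discrepancy exists.
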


\subsection{AFnTOSyn using 2PM}
According to (\ref{complexc1}) and (\ref{complexc2}), one can see that the larger $\theta_1$ and $\theta_3$, the faster AFnTOSyn can be achieved. Moreover, since in some cases, we cannot obtain the important parameters, even the network configuration. Next, we will apply the adaptive strategy on them and prove the validity of designed adaptive rules.
\begin{thm}\label{complete1}
For the error system,
\begin{align}
\dot{e}_i(t)=&\tilde{f}(e_i(t))+\theta_1(t)\sum_{j=1}^{N}\tilde{a}_{ij} e_j(t)\nonumber\\
+&\theta_2\sum_{j=1}^{N}b_{ij}\tilde{g}(e_j(t-\pi_{ij}(t)))-\theta_3(t)\mathrm{sgn}(e_{i}(t))\label{adaptsyn}
\end{align}
with the following adaptive rule
\begin{align}\label{adc1}
\dot{\theta}_1(t)=\left\{
\begin{array}{l}
d_1\mu(t)\sum_{i=1}^{N}e_i(t)^Te_i(t),\\
~~~\mathrm{if}~\sup\limits_{t-\pi(t)\le s\le t}\sum_{i=1}^{N}e_i(s)^Te_i(s)>1\\
{d_2}\big(\sum_{i=1}^{N}e_i(t)^Te_i(t)\big)^{\frac{1}{2}},~~~~~\mathrm{otherwise}
\end{array}
\right.
\end{align}
and
\begin{align}\label{adc3}
\dot{\theta}_3(t)=\left\{
\begin{array}{l}
0,~\mathrm{if}~\sup\limits_{t-\pi(t)\le s\le t}\sum_{i=1}^{N}e_i(s)^Te_i(s)>1\\
0,~\mathrm{if}~\sup\limits_{t-\pi(t)\le s\le t}\sum_{i=1}^{N}e_i(s)^Te_i(s)=0\\
{d_3}, ~~~~~~~~~~~~~~~~~~~~~~~~~~~~~~~~\mathrm{otherwise}
\end{array}
\right.
\end{align}
where parameters $d_1, d_2, d_3$ are any positive scalars, and the initial values are set to be $\theta_1(0)=\theta_3(0)=0$. Then FnTOSyn can be realized.
\end{thm}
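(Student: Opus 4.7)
The plan is to imitate the 2PM proof of Theorem \ref{athm}, but with the weighted quadratic form $\sum_{i=1}^{N}\xi_{i}e_{i}^{T}e_{i}$ that was introduced in Theorem \ref{first} to exploit the Metzler structure of $\tilde{A}$ via Lemma \ref{left}. As preparation, fix constants $\theta_{1}^{\star}>0$ and $\theta_{3}^{\star}>0$ large enough that
\begin{align*}
\beta+2L_{f}+\theta_{2}\max_{i,j}|b_{ij}|N\varepsilon_{1}+2\theta_{1}^{\star}\lambda_{\max}(\{\Xi\tilde{A}\}^{s})\\
+\theta_{2}\max_{i,j}|b_{ij}|N^{2}nL_{g}^{2}\varepsilon_{1}^{-1}\tfrac{1}{\min_{i}\xi_{i}}(1+\eta)<0
\end{align*}
and $\sqrt{1/\min_i\xi_i}\,\theta_{2}\max_{ij}|b_{ij}|NL_{g}-\theta_{3}^{\star}+\varepsilon_{2}^{\star}<0$ for small $\varepsilon_{2}^{\star}>0$. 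All analysis is carried out for $t\ge T$ chosen as in Theorem \ref{const2normthm} so that $\dot{\mu}/\mu$ and $\mu(t)/\mu(t-\pi(t))$ are within $\beta$ and $1+\eta$ up to a tolerance absorbed by the strict inequality.

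For Phase I, I would work with the Lyapunov candidate
\begin{align*}
\overline{V}_{3}(t)=\mu(t)\sum_{i=1}^{N}\xi_{i}e_{i}(t)^{T}e_{i}(t)+\tfrac{1}{d_{1}}(\theta_{1}(t)-\theta_{1}^{\star})^{2}
\end{align*}
and $\overline{W}_{3}(t)=\sup_{t-\pi(t)\le s\le t}\overline{V}_{3}(s)$. As in Theorem \ref{athm}, I split the argument into two cases. Case 1: if $\theta_{1}(t)$ eventually exceeds $\theta_{1}^{\star}$, the already-proved Theorem \ref{first} applies directly, with $\theta_{1}$ replaced by its increased value, so Phase I is finished in finite time. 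Case 2: $\theta_{1}(t)<\theta_{1}^{\star}$ throughout; then $(\theta_{1}(t)-\theta_{1}^{\star})^{2}$ is non-increasing (since the rule (\ref{adc1}) only pushes $\theta_1$ up), so at any point $t_{3}$ where $\overline{V}_{3}(t_{3})=\overline{W}_{3}(t_{3})$ the quadratic-in-$e$ part must be the supremum on $[t-\pi(t),t]$, which reinstates the maximum-value argument used in Phase I of Theorem \ref{first}. Differentiating $\overline{V}_{3}$ and expanding the cross term $\frac{2}{d_{1}}(\theta_{1}(t)-\theta_{1}^{\star})\dot{\theta}_{1}(t)$, the $+\dot{\theta}_{1}(t)/d_{1}$ contribution cancels the $\theta_{1}(t)$-factor in the coupling gradient while $-\theta_{1}^{\star}/d_{1}\cdot d_{1}\mu\sum e_i^Te_i$ replaces $\theta_{1}(t)\lambda_{\max}(\{\Xi\tilde A\}^s)$ effectively by $\theta_{1}^{\star}\lambda_{\max}(\{\Xi\tilde A\}^s)$, so by the preparation $\dot{\overline{V}}_{3}(t_{3})<0$. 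Hence $\overline{W}_{3}(t)$ is non-increasing and, dividing by $\mu(t-\pi(t))\to\infty$, the state crosses $\sup_{t-\pi(t)\le s\le t}\sum_{i}e_{i}(s)^{T}e_{i}(s)\le 1$ at some finite $T_{3}\ge T$.

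For Phase II, mirror the construction in Theorem \ref{athm} by setting
\begin{align*}
\overline{V}_{4}(t)=\Bigl(\sum_{i=1}^{N}\xi_{i}e_{i}(t)^{T}e_{i}(t)\Bigr)^{1/2}+\tfrac{1}{2d_{2}}(\theta_{1}(t)-\theta_{1}^{\star})^{2}\\
+\tfrac{1}{2d_{3}}(\theta_{3}(t)-\theta_{3}^{\star})^{2}+\varepsilon_{2}^{\star}t
\end{align*}
and $\overline{W}_{4}(t)=\sup_{t-\pi(t)\le s\le t}\overline{V}_{4}(s)$. Again argue by cases on whether $\theta_{3}(t)$ has already surpassed $\theta_{3}^{\star}$. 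When it has not, the two adaptation penalties are non-increasing and the supremum is realized at $t$ itself; differentiating, the $\dot{\theta}_{1}$ term combined with the Lipschitz bound on $\tilde{f}$ and the coupling absorbs the $(L_{f}+\theta_{1}\lambda_{\max}(\{\Xi\tilde{A}\}^{s}))$-type terms, while the sign-term $-\theta_{3}(t)\sum_{i}\xi_{i}e_{i}^{T}\mathrm{sgn}(e_{i})$ together with the $\dot{\theta}_{3}/d_{3}$ cross term leaves the residual $\sqrt{1/\min_i\xi_i}\,\theta_{2}\max|b_{ij}|NL_{g}-\theta_{3}^{\star}+\varepsilon_{2}^{\star}<0$, analogous to the $\sqrt{m}|c_{2}|-c_{3}^{\star}+\varepsilon_{2}^{\star}$ step. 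Then $\overline{V}_{4}(t)\le \overline{W}_{4}(T_{3})$ bounds $(\sum\xi_{i}e_{i}^{T}e_{i})^{1/2}$ above by a constant minus $\varepsilon_{2}^{\star}(t-T_{3})$, forcing $e_{i}\equiv0$ after $T_{4}=T_{3}+(\varepsilon_{2}^{\star})^{-1}[1+\tfrac{(\theta_{1}^{\star})^{2}}{2d_{2}}+\tfrac{(\theta_{3}^{\star})^{2}}{2d_{3}}]$.

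The delicate step, and the one I would write out most carefully, is verifying that the maximum-value mechanism still applies once the extra adaptation penalties are appended to $\overline{V}_{3}$ and $\overline{V}_{4}$: the penalties depend on $t$ only (not on delayed arguments), so in the sub-case where they are non-increasing the supremum over $[t-\pi(t),t]$ of the penalty coincides with its current value, and the standard $V(t)=W(t)$ argument carries over. Handling the complementary sub-case (the adaptive parameter has already passed the critical threshold) by reduction to Theorems \ref{first}/\ref{const2normthm} avoids having to control the sign of $\dot{\theta}$ explicitly in that regime, which is exactly the trick already used in Theorem \ref{athm}.
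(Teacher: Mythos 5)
Your proposal follows the paper's own route essentially step for step: the same 2PM architecture, the same weighted Lyapunov candidates $\mu(t)\sum_i\xi_ie_i^Te_i$ and $(\sum_i\xi_ie_i^Te_i)^{1/2}$ augmented by quadratic adaptation penalties, the same case split on whether $\theta_1(t)$ (resp.\ $\theta_3(t)$) has already crossed its critical threshold (reducing to Theorem \ref{first} when it has, and using monotonicity of the penalty to reinstate the maximum-value argument when it has not), and the same settling-time bound. The only substantive discrepancy is in the coefficient you attach to the $\theta_1$-penalty, and there the cancellation you assert does not actually close.

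Concretely: with the adaptive rule fixed at $\dot{\theta}_1(t)=d_1\mu(t)\sum_ie_i(t)^Te_i(t)$, your penalty $\tfrac{1}{d_1}(\theta_1(t)-\theta_1^{\star})^2$ produces the cross term $2(\theta_1(t)-\theta_1^{\star})\mu(t)\sum_ie_i^Te_i$, which carries no factor of $\lambda_{\max}(\{\Xi\tilde{A}\}^s)$, while the coupling contribution it must absorb is $2\mu(t)\theta_1(t)E^T(\{\Xi\tilde{A}\}^s\otimes I_n)E\le 2\mu(t)\theta_1(t)\lambda_{\max}(\{\Xi\tilde{A}\}^s)E^TE$. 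Their sum is controlled by $2\mu(t)\big[(\theta_1-\theta_1^{\star})+\theta_1\lambda_{\max}\big]E^TE$, and replacing $\theta_1(t)$ by $\theta_1^{\star}$ in the $\lambda_{\max}$ term requires $(\theta_1-\theta_1^{\star})(1+\lambda_{\max})\le0$, i.e.\ $|\lambda_{\max}(\{\Xi\tilde{A}\}^s)|\le1$, which is not guaranteed. Since $d_1$ is an arbitrary positive scalar fixed by the theorem statement, the missing factor cannot be hidden in $d_1$; it must appear in the Lyapunov function. The paper's choice $\tfrac{|\lambda_{\max}(\{\Xi\tilde{A}\}^s)|}{d_1}(\theta_1(t)-\theta_1^{\star})^2$ (and likewise $\tfrac{|\lambda_{\max}(\{\Xi\tilde{A}\}^s)|}{2d_2}(\theta_1(t)-\theta_1^{\star})^2$ in Phase II) makes the cross term equal to $-2\lambda_{\max}(\theta_1-\theta_1^{\star})\mu E^TE$, so that the two contributions combine exactly into $2\mu\lambda_{\max}\theta_1^{\star}E^TE\le 2\mu\lambda_{\max}\theta_1^{\star}\sum_i\xi_ie_i^Te_i$. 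This is a one-line fix, but as written your key inequality $\dot{\overline{V}}_3(t_3)<0$ is not justified; everything else in the proposal matches the paper's argument.
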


\begin{proof}
With the same argument as that in Theorem \ref{const2normthm}, the time is considered from time $T$.

{\bf Phase I:} Define a new Lyapunov function
\begin{align}\label{ov3}
\overline{V}_3(t)=&\mu(t)\sum_{i=1}^{N}\xi_{i}e_i(t)^Te_i(t)\nonumber\\
&+\frac{|\lambda_{\max}(\{\Xi\tilde{A}\}^s)|}{d_1}(\theta_1(t)-\theta_1^{\star})^2
\end{align}
and the function $\overline{W}_3(t)=\sup_{t-\pi(t)\le s \le t}\overline{V}_3(s)$.

Obviously, $\overline{V}_3(t)\le \overline{W}_3(t), t\ge T$. If there exists a point $t_3$, $\overline{V}_3(t_3)=\overline{W}_3(t_3)$. Before differentiating, we discuss the limit of $c_4(t)$.

Similarly, with the same arguments in Theorem \ref{athm}, if $\theta_1(t)>\theta_1^{\star}, t>T^{\star}$, where $\theta_1^{\star}$ is a constant such that (\ref{complexc1}) holds, then FnTOSyn can be achieved from Theorem \ref{first}. Otherwise, if the limit of $\theta_1(t)$ is always bounded by $\theta_1^{\star}$, $(\theta_1(t)-\theta_1^{\star})^2$ would be a decreasing function, thus $\mu(t)\sum_{i=1}^{N}\xi_{i}e_i(t)^Te_i(t)$ must be the maximum value in the interval $[t-\pi(t), t]$. Now, differentiating $\overline{V}_3(t)$, with the same process in Theorem \ref{first}, we have
\begin{align}
&\dot{\overline{V}}_3(t)\nonumber\\
\le&\bigg[\frac{\dot{\mu}(t)}{\mu(t)}+2L_f+\theta_2\max_{i,j}|b_{ij}|N\varepsilon_1+2\theta_1^{\star}\lambda_{\max}(\{\Xi\tilde{A}\}^s)\nonumber\\
&+\theta_2\max_{i,j}|b_{ij}|N^2nL_g^2\varepsilon_1^{-1}\frac{1}{\min_i\xi_i}\frac{{\mu(t)}}{\mu(t-\pi(t))}\bigg]\overline{V}_1(t)\nonumber\\
<&0\nonumber
\end{align}
Therefore,
\begin{align}
\overline{W}_3(T)&\ge \overline{W}_3(t)\ge\sup_{t-\pi(t)\le s\le t}\mu(s)\sum_{i=1}^{N}\xi_{i}e_i(s)^Te_i(s)\nonumber\\
&\ge \mu(t-\pi(t))\sup_{t-\pi(t)\le s\le t}\sum_{i=1}^{N}\xi_{i}e_i(s)^Te_i(s),\nonumber
\end{align}
i.e., there is a time point $T_3(\ge T)$ such that
\begin{align}
\sup_{t-\pi(t)\le s\le t}\sum_{i=1}^{N}\xi_{i}e_i(s)^Te_i(s)\le 1,~~t\ge T_3
\end{align}

{\bf Phase II:} From (\ref{complexc2}), one can find a sufficiently large constant $\theta_3^{\star}>0$ and sufficiently small constant $\varepsilon_2^{\star}>0$, such that $\theta_2\max_{ij}|b_{ij}|N^{\frac{3}{2}}n^{\frac{1}{2}}d_2-\theta_3^{\star}+\varepsilon_2^{\star}<0$. Using this $\varepsilon_2^{\star}$,
\begin{align}\label{ov4}
\overline{V}_4(t)=&\big(\sum_{i=1}^{N}\xi_{i}e_i(t)^Te_i(t)\big)^{\frac{1}{2}}+\frac{|\lambda_{\max}(\{\Xi\tilde{A}\}^s)|}{2d_2}(\theta_1(t)-\theta_1^{\star})^2\nonumber\\
&+\frac{1}{2d_3}(\theta_3(t)-\theta_3^{\star})^2+\varepsilon_2^{\star}t,~~ t\ge T_3
\end{align}
and $\overline{W}_4(t)=\sup_{t-\pi(t)\le s\le t}\overline{V}_4(s), t\ge T_3$. Similarly, $\overline{V}_4(t)\le \overline{W}_4(t)$.
If there exists a time point such that the equality holds. Similar discussions can be applied for $\theta_3(t)$ as that in $\theta_1(t)$, here we omit them. Differentiating $\overline{V}_4(t)$,
\begin{align}
&\dot{\overline{V}}_4(t)\nonumber\\
\le&\varepsilon_2^{\star}+\big(\sum_{i=1}^{N}\xi_{i}e_i(t)^Te_i(t)\big)^{-\frac{1}{2}}\cdot\nonumber\\
&\bigg[(L_f+\theta_1(t)\lambda_{\max}(\{\Xi\tilde{A}\}^s))\sum_{i=1}^{N}\xi_{i}e_i(t)^Te_i(t)\nonumber\\
&+\theta_2\sum_{i=1}^N\sum_{j=1}^N\xi_ie_i(t)^Tb_{ij}\tilde{g}(e_j(t-\pi_{ij}(t)))\nonumber\\
&-\theta_3(t)\sum_{i=1}^N\xi_ie_i(t)^T\mathrm{sgn}(e_i(t))\bigg]\nonumber\\
&+(\theta_1(t)-\theta_1^{\star})|\lambda_{\max}(\{\Xi\tilde{A}\}^s)|\big(\sum_{i=1}^{N}\xi_{i}e_i(t)^Te_i(t)\big)^{\frac{1}{2}}\nonumber\\
&+(\theta_3(t)-\theta_3^{\star})\nonumber\\
\le&\varepsilon_2^{\star}+(\theta_2\max_{ij}|b_{ij}|N^{\frac{3}{2}}n^{\frac{1}{2}}L_g-\theta_3^{\star})<0\nonumber
\end{align}

Therefore,
\begin{align*}
&\big(\sum_{i=1}^{N}\xi_{i}e_i(t)^Te_i(t)\big)^{\frac{1}{2}}+\varepsilon_2^{\star}t\le \overline{V}_4(t)\le \overline{W}_4(t)\le \overline{W}_4(T_3) \\
\le&1+\varepsilon_2^{\star} T_3+\frac{|\lambda_{\max}(\{\Xi\tilde{A}\}^s)|(\theta_1^{\star})^2}{2d_2}+\frac{(\theta_3^{\star})^2}{2d_3},
\end{align*}
i.e., $e_i(t)\equiv 0$ for any
\begin{align*}
t\ge T_4=(\varepsilon_2^{\star})^{-1}\bigg[1+\frac{|\lambda_{\max}(\{\Xi\tilde{A}\}^s)|(\theta_1^{\star})^2}{2d_2}+\frac{(\theta_3^{\star})^2}{2d_3}\bigg]+T_3
\end{align*}
The proof is completed.
\end{proof}

As a direct result of the above theorem, we have
\begin{col}
For the network (\ref{inner}) with control (\ref{control}), under the adaptive rules (\ref{adc1}) and (\ref{adc3}), FnTISyn can be realized.
\end{col}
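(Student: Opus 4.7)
The plan is to reduce this corollary to Theorem \ref{complete1} by a direct structural identification of the error dynamics. Define $e_i(t) = y_i(t) - \phi(t)$, so that equation (\ref{inner}) already exhibits the coupling $\sum_j b_{ij}[g(y_j(t-\pi_{ij}(t))) - g(\phi(t-\pi_{ij}(t)))] = \sum_j b_{ij}\tilde{g}(e_j(t-\pi_{ij}(t)))$, and the intrinsic term $f(y_i(t)) - f(\phi(t)) = \tilde{f}(e_i(t))$. By Assumption \ref{lip}, the Lipschitz bounds $\|\tilde{f}(e_i)\|_2 \le L_f\|e_i\|_2$ and $\|\tilde{g}(e_i)\|_2 \le L_g\|e_i\|_2$ are exactly those used in the FnTOSyn proof, so the inner and outer error systems share the same functional form.

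Next I would substitute the adaptive pinning control (\ref{control}) with $\theta_1$ replaced by $\theta_1(t)$ and $\theta_3$ replaced by $\theta_3(t)$. The linear part $-\theta_1(t)\sigma e_1(t)$ on the pinned node can be absorbed into $\theta_1(t)\sum_j a_{ij}e_j(t)$, producing $\theta_1(t)\sum_j \tilde{a}_{ij}e_j(t)$ where $\tilde{A} = A - \mathrm{diag}(\sigma,0,\dots,0)$. This yields precisely the adaptive error system (\ref{adaptsyn}) of Theorem \ref{complete1}, and by Lemma \ref{left} the matrix $\tilde{A}$ inherits $\lambda_{\max}(\{\Xi\tilde{A}\}^s) < 0$, which is the spectral condition invoked throughout that theorem's proof.

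With the identification complete, applying Theorem \ref{complete1} directly gives the existence of a finite settling time $T_4$ such that $e_i(t) \equiv 0$ for $t \ge T_4$, which is exactly FnTISyn. No new estimates are needed: the two-phases analysis, the adaptive maximum-value functions $\overline{V}_3, \overline{V}_4$ of (\ref{ov3})--(\ref{ov4}), and the bounds on $\theta_1(t)$ and $\theta_3(t)$ all transfer verbatim because they depend only on $L_f$, $L_g$, the delay structure from Assumption \ref{as1}--\ref{as2}, and the spectral data of $\tilde{A}$.

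The only potential subtlety, and the step I would double-check, is that the inner-synchronization setup is compatible with the zero-row-sum/Metzler hypothesis on $A$ used in Lemma \ref{left}: namely, that condition (\ref{should}) holds so that $\phi(t)$ is genuinely a synchronization trajectory of (\ref{masterSys}). Once this is noted (and it is already the standing assumption of Subsection III-C), the corollary follows as a one-line invocation of Theorem \ref{complete1}, with no further obstacle.
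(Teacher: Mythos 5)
Your reduction is exactly the paper's intended argument: the corollary is stated there as a direct consequence of Theorem \ref{complete1}, with the inner-synchronization error dynamics (\ref{inner}) identified with the adaptive error system (\ref{adaptsyn}) after absorbing the pinning term into $\tilde{A}$ and invoking condition (\ref{should}) together with the zero-row-sum property of $A$. Your write-up supplies these identification details (which the paper leaves implicit) correctly, so the proposal is sound and matches the paper's approach.
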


\begin{rem}
From the adaptive rule (\ref{adc1}), one can see that the drive system (\ref{masterSys}) has been changed due to the change of $\theta_1$, therefore, if one requirement for outer synchronization is that the drive system cannot be changed, then one should use the other controller defined in (\ref{control2}).
\end{rem}

\begin{thm}
For the following error system under (\ref{control2}),
\begin{align}
&\dot{e}_i(t)=\tilde{f}(e_i(t))+\theta_1\sum_{j=1}^{N}{a}_{ij} e_j(t)\label{adaptsyn2}\\
&~+\theta_2\sum_{j=1}^{N}b_{ij}\tilde{g}(e_j(t-\pi_{ij}(t)))-\theta_3(t)\mathrm{sgn}(e_{i}(t))-\theta_4(t)e_i(t)\nonumber
\end{align}
with the following adaptive rules defined in (\ref{adc3}) and
\begin{align}\label{adc4}
\dot{\theta}_4(t)=\left\{
\begin{array}{l}
d_1\mu(t)\sum_{i=1}^{N}e_i(t)^Te_i(t),\\
~~~\mathrm{if}~\sup\limits_{t-\pi(t)\le s\le t}\sum_{i=1}^{N}e_i(s)^Te_i(s)>1\\
{d_2}\big(\sum_{i=1}^{N}e_i(t)^Te_i(t)\big)^{\frac{1}{2}},~~~~~\mathrm{otherwise}
\end{array}
\right.
\end{align}
where parameters $d_1, d_2, d_3$ are any positive scalars, and the initial values are set to be $\theta_3(0)=\theta_4(0)=0$. Then FnTOSyn can be realized.
\end{thm}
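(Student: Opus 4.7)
The plan is to mirror the two-phase argument of Theorem \ref{complete1}, but with $\theta_4(t)$ playing the role that $\theta_1(t)$ played there, since in the controller (\ref{control2}) the adaptive gain $\theta_4(t)$ multiplies $e_i(t)$ directly on each node rather than acting through the coupling matrix. First I would fix $\theta_4^{\star}>0$ large enough that (\ref{complexc3}) holds with $\theta_4$ replaced by $\theta_4^{\star}$, and $\theta_3^{\star}>0$ large enough that $\theta_2\max_{ij}|b_{ij}|N^{\frac{3}{2}}n^{\frac{1}{2}}L_g-\theta_3^{\star}+\varepsilon_2^{\star}<0$ for some small $\varepsilon_2^{\star}>0$, and restart the clock from a time $T$ after which $\dot\mu/\mu$ and $\mu(t)/\mu(t-\pi(t))$ are close enough to $\beta$ and $1+\eta$.

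In Phase I, I would take the Lyapunov candidate
\begin{align*}
\overline{V}_3(t)=\mu(t)\sum_{i=1}^{N}\xi_i e_i(t)^T e_i(t)+\frac{1}{d_1}\bigl(\theta_4(t)-\theta_4^{\star}\bigr)^2,
\end{align*}
together with $\overline{W}_3(t)=\sup_{t-\pi(t)\le s\le t}\overline{V}_3(s)$, and do the standard dichotomy: either $\theta_4(t)$ eventually exceeds $\theta_4^{\star}$, in which case Theorem \ref{second} concludes directly, or $(\theta_4(t)-\theta_4^{\star})^2$ is monotone decreasing and the supremum on $[t-\pi(t),t]$ is attained by the $\mu p^Tp$ part, so at a contact point I can differentiate. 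The term $-2\theta_4(t)\mu(t)\sum_i\xi_i e_i^Te_i$ coming from the controller combines with $\tfrac{2}{d_1}(\theta_4(t)-\theta_4^{\star})\dot\theta_4(t)$ and the adaptive rule (\ref{adc4}) to leave an effective coefficient $-2\theta_4^{\star}$ (absorbing the mismatch between the weighted and unweighted quadratic forms via $\xi_{\min}\le\xi_i\le\xi_{\max}$ if needed, which only adjusts the choice of $\theta_4^{\star}$). The remaining estimates are identical to those in Theorem \ref{first} / Theorem \ref{complete1}, so condition (\ref{complexc3}) drives $\dot{\overline{V}}_3<0$ at the contact point, and the maximum-value/$\mu$-scaling argument delivers a finite time $T_3$ after which $\sup_{t-\pi(t)\le s\le t}\sum_i\xi_i e_i(s)^Te_i(s)\le 1$.

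In Phase II, I would use
\begin{align*}
\overline{V}_4(t)=\Bigl(\sum_{i=1}^{N}\xi_i e_i(t)^T e_i(t)\Bigr)^{\frac12}+\frac{1}{2d_2}\bigl(\theta_4(t)-\theta_4^{\star}\bigr)^2+\frac{1}{2d_3}\bigl(\theta_3(t)-\theta_3^{\star}\bigr)^2+\varepsilon_2^{\star}t,
\end{align*}
and $\overline{W}_4(t)=\sup_{t-\pi(t)\le s\le t}\overline{V}_4(s)$, doing the same dichotomy on $\theta_3(t)$ as was done on $\theta_1(t)$ in Theorem \ref{complete1}. Differentiating at a contact point, the $\tilde f$ and coupling-matrix contributions are absorbed by $L_f$ and $\lambda_{\max}(\{\Xi\tilde A\}^s)$ as in Theorem \ref{second}, the delayed coupling is bounded by $\theta_2\max_{ij}|b_{ij}|N^{\frac{3}{2}}n^{\frac12}L_g$ using $\|e_i\|_2\le\|e_i\|_1\le\sqrt n\|e_i\|_2$ and Cauchy--Schwarz on $\xi$, the $-\theta_3(t)$ sign-term cancels with the adaptive cross-term from $\dot\theta_3$ to leave $-\theta_3^{\star}$, and the direct $-\theta_4(t) e_i$ feedback is non-positive and can simply be discarded once multiplied by $(\sum_i\xi_i e_i^Te_i)^{-1/2}$. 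The resulting upper bound $\varepsilon_2^{\star}+\theta_2\max_{ij}|b_{ij}|N^{\frac{3}{2}}n^{\frac12}L_g-\theta_3^{\star}<0$ yields a linear-in-$t$ decay that forces $e_i(t)\equiv 0$ for $t\ge T_4=(\varepsilon_2^{\star})^{-1}[1+\tfrac{(\theta_4^{\star})^2}{2d_2}+\tfrac{(\theta_3^{\star})^2}{2d_3}]+T_3$.

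The main obstacle will be the bookkeeping around the adaptive cross-terms: the rules (\ref{adc3}) and (\ref{adc4}) use the \emph{unweighted} quadratic form $\sum_i e_i^Te_i$, while the Lyapunov functions carry the left-eigenvector weights $\xi_i$, so the cancellation $-2\theta_4(t)+2(\theta_4(t)-\theta_4^{\star})=-2\theta_4^{\star}$ produces a residual factor bounded between $\xi_{\min}$ and $\xi_{\max}$; this is handled by enlarging $\theta_4^{\star}$, exactly as in the proof of Theorem \ref{complete1}. Once that bookkeeping is in place, the rest of the argument is a direct transcription of Theorems \ref{second} and \ref{complete1}, and the corollary for FnTISyn follows by the same reduction used after Theorem \ref{complete1}.
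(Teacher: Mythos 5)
Your proposal is correct and is essentially the paper's own argument: the paper proves this theorem only by the remark that it ``can be proved similarly to that in Theorem~\ref{complete1},'' and your two-phase writeup with $\theta_4(t)$ taking over the role of $\theta_1(t)$, the same dichotomy on the bounded/unbounded limit of the adaptive gain, and the analogous Lyapunov functions $\overline{V}_3,\overline{V}_4$ is exactly that intended transcription. Your handling of the weighted-versus-unweighted quadratic-form mismatch (via $\xi_i\le 1$ from $\sum_i\xi_i=1$, or equivalently by enlarging $\theta_4^{\star}$) is sound and in fact makes explicit a detail the paper glosses over.
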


This result can be proved similarly to that in Theorem \ref{complete1}.

\section{Numerical simulations}\label{ns}
Next, we present two numerical examples to demonstrate the validity of our proposed 2PM.
\subsection{FnTSta of system with delay}
{\bf Example 1}: Consider the simplest system:
\begin{align}\label{est}
\dot{p}(t)=p(t)+2p(t-\pi(t))-c_3\mathrm{sgn}(p(t)))-c_4p(t)
\end{align}
where $p(t)\in R$.

Let $\pi(t)=0.5t$, then according to Corollary 1, $\mu(t)=t^{\varrho}$, pick $\varrho=0.1, \varepsilon_1=2^{0.05}$, then conditions (\ref{power1}) and (\ref{c2}) becomes
\begin{align}
2(1-c_4)+4\cdot2^{0.05}<0,~~\mathrm{and}~~2-c_3<0
\end{align}
so parameter $c_4$ should be larger than $1+2^{1.05}=3.071$, and $c_3$ should be larger than $2$.

Pick $c_3=2.1$, $c_4=3.5$, then Fig. \ref{conv} shows that FnTSta can be realized; Fig. \ref{compare1} shows that under the same $c_3=2.1$, the larger $c_4$, the smaller convergence time, and small $c_4$ may destroy its stability. Fig. \ref{compare2} shows that under the same $c_4=3.5$, the larger $c_3$, the smaller convergence time.
\begin{figure}
\begin{center}
\includegraphics[width=0.5\textwidth]{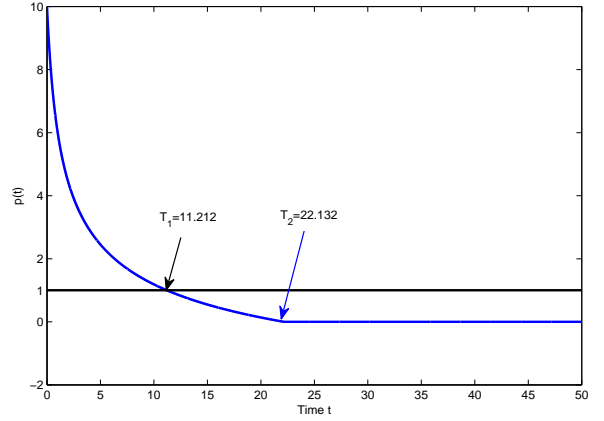}
\caption{FnTSta of system (\ref{est})}\label{conv}
\end{center}
\end{figure}
\begin{figure}
\begin{center}
\includegraphics[width=0.5\textwidth]{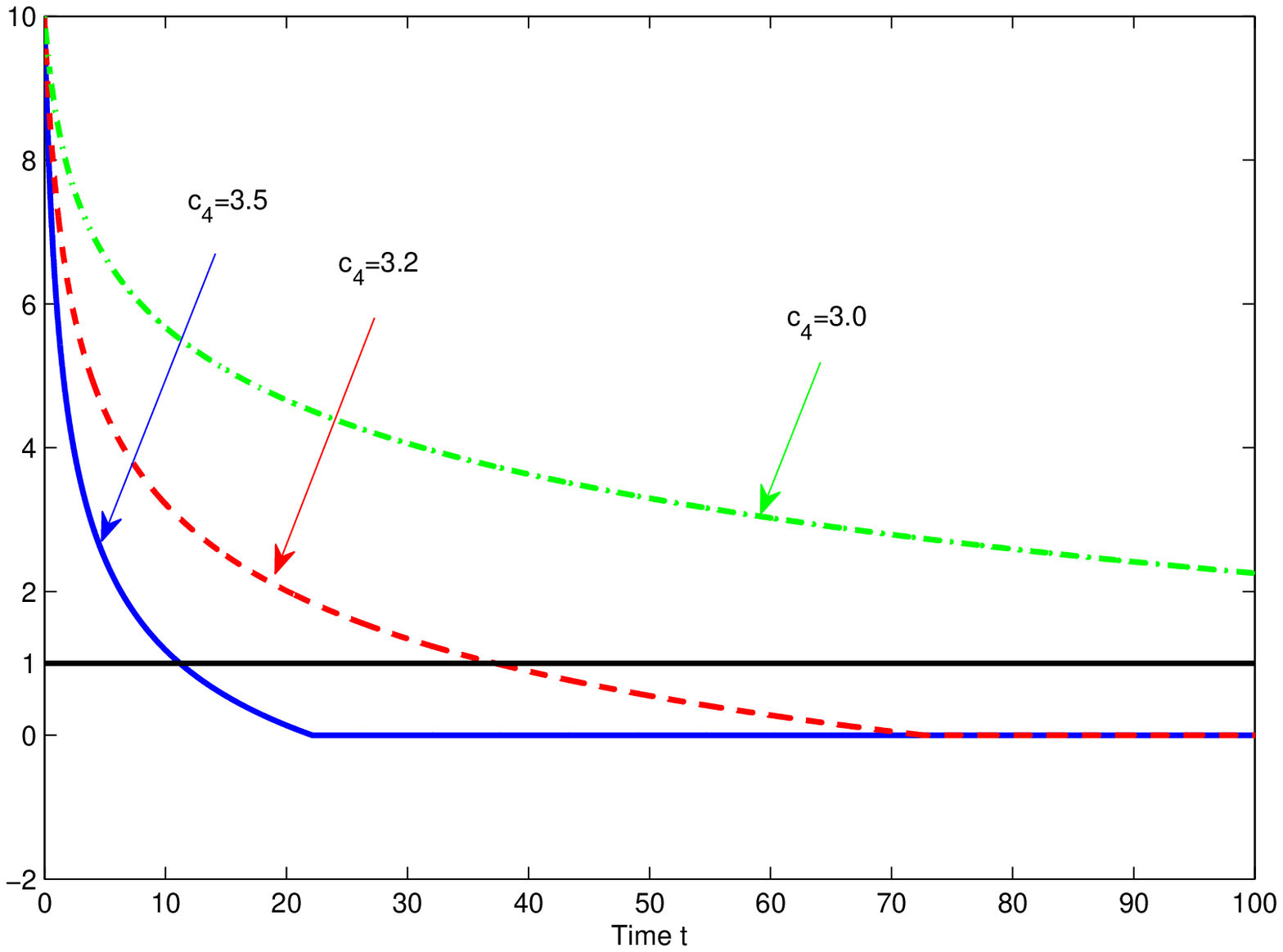}
\caption{FnTSta of system (\ref{est}) under different $c_4$}\label{compare1}
\end{center}
\end{figure}
\begin{figure}
\begin{center}
\includegraphics[width=0.5\textwidth]{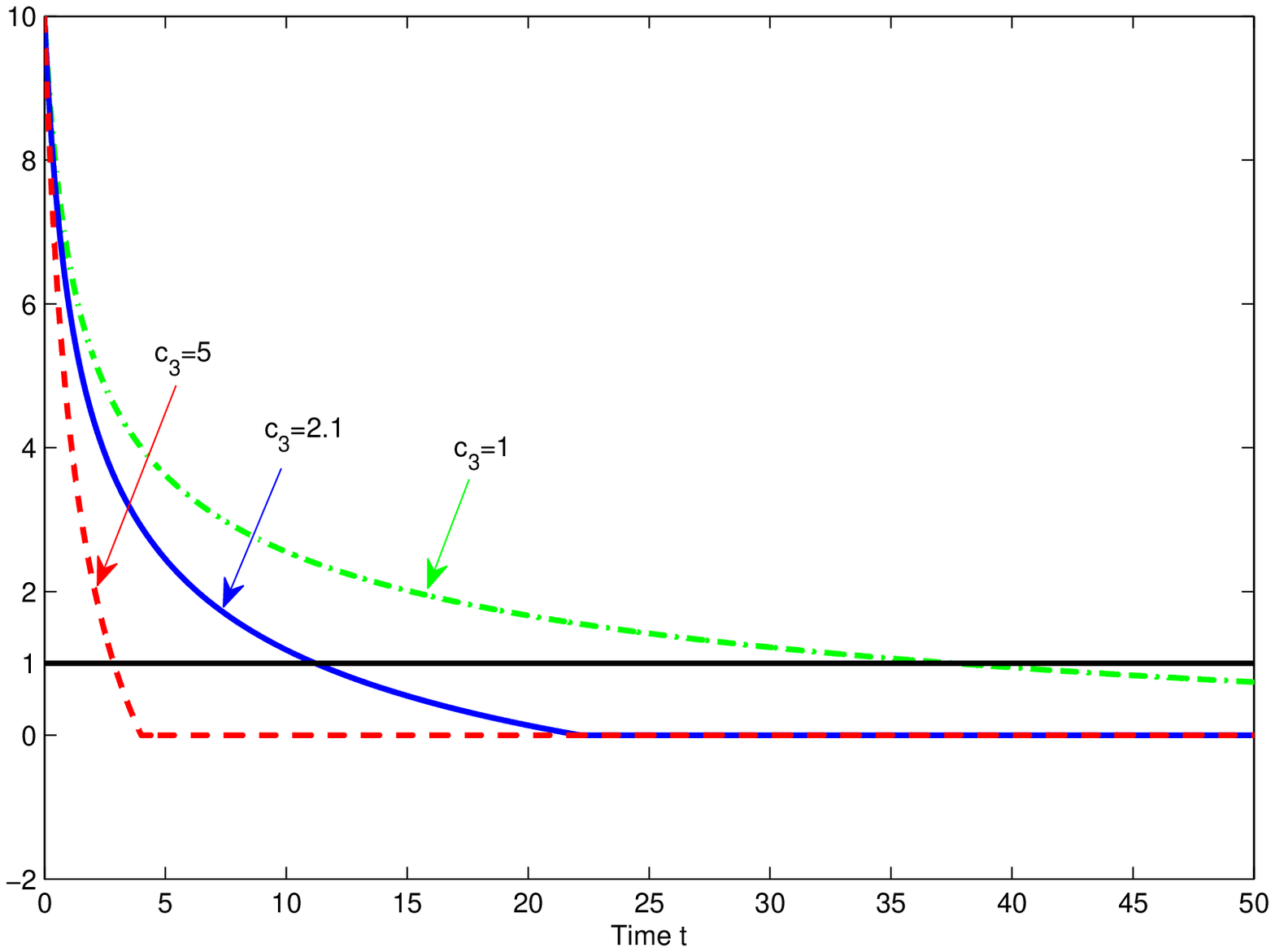}
\caption{FnTSta of system (\ref{est}) under different $c_3$}\label{compare2}
\end{center}
\end{figure}

Now, we apply the adaptive strategy to $c_3$ and $c_4$ in (\ref{est}), where the adaptive rules are defined as
\begin{align}\label{rule1}
\dot{c}_3(t)=\left\{
\begin{array}{ll}
0,&~\mathrm{if}~\sup\limits_{0.5t\le s\le t}p(s)^2>1\\
0.1,&~\mathrm{if}~\sup\limits_{0.5t\le s\le t}p(s)^2\le 1\\
0,&~\mathrm{if}~\sup\limits_{0.5t\le s\le t}p(s)^2=0
\end{array}
\right.
\end{align}
and
\begin{align}\label{rule2}
\dot{c}_4(t)=\left\{
\begin{array}{ll}
0.1t^{0.1}p(t)^2,&~\mathrm{if}~\sup\limits_{0.5t\le s\le t}p(s)^2>1\\
0.1|p(t)|,&~\mathrm{if}~\sup\limits_{0.5t\le s\le t}p(s)^2\le 1
\end{array}
\right.
\end{align}
with $c_3(0)=c_4(0)=0$, then according to Theorem \ref{athm}, AFnTSta can be realized. Fig. \ref{adeg1} shows the dynamics of $p(t)$, $c_3(t)$ and $c_4(t)$, respectively. Moreover, since the time delay $\pi(t)=0.5t$ is unbounded, and according to adaptive rules (\ref{rule1}) and (\ref{rule2}), the unchange of $c_3(t)$ and $c_4(t)$ has a long delay than the first time point $t^{\star}$ where $p(t^{\star})=0$.
\begin{figure}
\begin{center}
\includegraphics[width=0.5\textwidth]{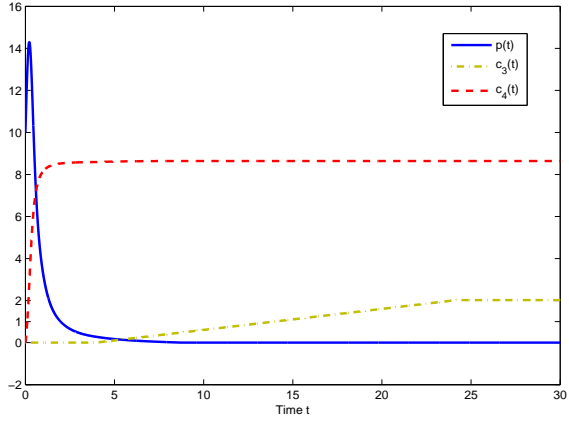}
\caption{The dynamics of $p(t)$, $c_3(t)$ and $c_4(t)$ for system (\ref{est}) under adaptive rules (\ref{rule1}) and (\ref{rule2})}\label{adeg1}
\end{center}
\end{figure}

\subsection{FnTOSyn of complex networks with delay}
{\bf Example 2}: Suppose the drive system is composed of three Lorenz oscillators,
\begin{align}\label{simuMaster}
\dot{x}_i(t)=&f(x_i(t))+0.1\sum_{j=1}^{3}a_{ij}x_j(t)+\sum_{j=1}^{3}b_{ij}g(x_j(t-\pi_{ij}(t)))
\end{align}
where $x_i(t)=(x_{i1}(t), x_{i2}(t), x_{i3}(t))^T$, the intrinsic function $f(x_i): R^3\rightarrow R^3$ is described by:
\begin{align*}
\left(
\begin{array}{c}
10(x_{i2}(t)-x_{i1}(t))\\
28{x}_{i1}(t)-{x}_{i2}(t)-x_{i1}(t){x}_{i3}(t)\\
x_{i1}(t){x}_{i2}(t)-8x_{i3}(t)/3
\end{array}\right),
\end{align*}
the coupling matrix
\begin{align*}
&A=(a_{ij})=\left(
\begin{array}{ccc}
-5 &  2 & 3\\
1 & -4  & 3\\
1 & 2 & -3\\
\end{array}
\right),\\
&B=(b_{ij})=\left(
\begin{array}{ccc}
1 &  -1 & 1\\
1 &1  & -1\\
-1 & 1 & 1\\
\end{array}
\right),
\end{align*}
the nonlinear function
\begin{align*}
g(x_i(t))=
\left(
\begin{array}{c}
\sin(x_{i1}(t))+2x_{i1}(t)\\
\sin(x_{i2}(t))+2x_{i2}(t)\\
\sin(x_{i3}(t))+2x_{i3}(t)
\end{array}
\right),
\end{align*}
and the asynchronous time delays are
\begin{align*}
\pi_{ij}(t)=0.5(1-0.1\cdot|\sin(i+2j)|)t\le \pi(t)=0.5t,
\end{align*}
and initial values are
\begin{align*}
x_1(0)=(-1.5771, 0.5080, 0.2820)^T,\\
x_2(0)=(0.0335, -1.3337, 1.1275)^T,\\
x_3(0)=(0.3502, -0.2991, 0.0229)^T.
\end{align*}
The inner error among each node in the drive system (\ref{simuMaster}) is measured by the following index:
\begin{align}\label{E1}
\|E1(t)\|=\|x_2(t)-x_1(t)\|_2+\|x_3(t)-x_1(t)\|_2
\end{align}

Define the response system as:
\begin{align}
\dot{y}_i(t)=&f(y_i(t))+0.1\sum_{j=1}^{3}a_{ij} y_j(t)+\sum_{j=1}^{3}b_{ij} g(y_j(t-\tau_{ij}(t)))\nonumber\\
&+u_i(t)
\label{simuSlave}
\end{align}
where $f(\cdot), A, B$ are defined as before. Its initial values are:
\begin{align*}
y_1(0)=&(-0.8479, -1.1201, 2.5260)^T,\\
y_2(0)=&(1.6555, 0.3075, -1.2571)^T,\\
y_3(0)=&(-0.8655, -0.1765, 0.7914)^T.
\end{align*}
The inner error among each node in the response system (\ref{simuSlave}) is measured by the following index:
\begin{align}\label{E2}
\|E2(t)\|=\|y_2(t)-y_1(t)\|_2+\|y_3(t)-y_1(t)\|_2
\end{align}
Moreover, as for the outer synchronization error between the drive system (\ref{simuMaster}) and the response system (\ref{simuSlave}), we use the following index:
\begin{align}\label{outers}
\|E(t)\|_2=\sqrt{\sum_{i=1}^3[y_i(t)-x_i(t)]^T[y_i(t)-x_i(t)]}
\end{align}

When there is no controller, i.e., $u_i(t)=0$, Fig. \ref{adeg5} (a) shows that nodes among the drive system are not synchronized, (b) shows that nodes among the response system are not synchronized, and (c) shows that outer synchronization is also not achieved.

Next, we apply the adaptive strategy to realize outer synchronization. The external controller $u_i(t)$ is defined by
\begin{align}\label{comp0}
u_i(t)=-\theta_3(t)\mathrm{sgn}(e_{i}(t))-\theta_4(t)e_i(t),~ i=1, 2, 3
\end{align}
where $e_i(t)=y_i(t)-x_i(t)$, and the adaptive rules are
\begin{align}\label{comp1}
\dot{\theta}_3(t)=\left\{
\begin{array}{l}
0.02, ~\mathrm{if}~0<\sup\limits_{t-\pi(t)\le s\le t}\sum_{i=1}^{3}e_i(s)^Te_i(s)\le 1\\
0,~~~~~~~~~~~~~~~~~~~~~~~~~~~~~~~~~~\mathrm{otherwise}
\end{array}
\right.
\end{align}
and
\begin{align}\label{comp2}
\dot{\theta}_4(t)=\left\{
\begin{array}{l}
0.05\mu(t)\sum_{i=1}^{N}e_i(t)^Te_i(t),\\
~~~\mathrm{if}~\sup\limits_{t-\pi(t)\le s\le t}\sum_{i=1}^{3}e_i(s)^Te_i(s)>1\\
0.05\big(\sum_{i=1}^{3}e_i(t)^Te_i(t)\big)^{\frac{1}{2}},~~~\mathrm{otherwise}
\end{array}
\right.
\end{align}
Fig. \ref{adeg6} shows that under the above defined adaptive controller, FnTOSyn can be finally realized.

\begin{figure}
\begin{center}
\includegraphics[width=0.5\textwidth]{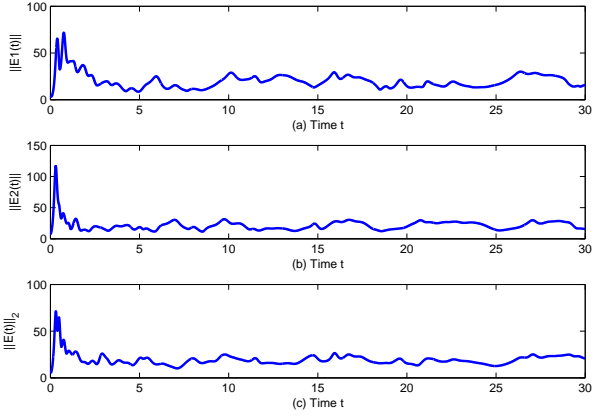}
\caption{The dynamics of $E1(t)$, $E2(t)$ and $\|E(t)\|_2$ under $u_i(t)=0$}\label{adeg5}
\end{center}
\end{figure}

\begin{figure}
\begin{center}
\includegraphics[width=0.5\textwidth]{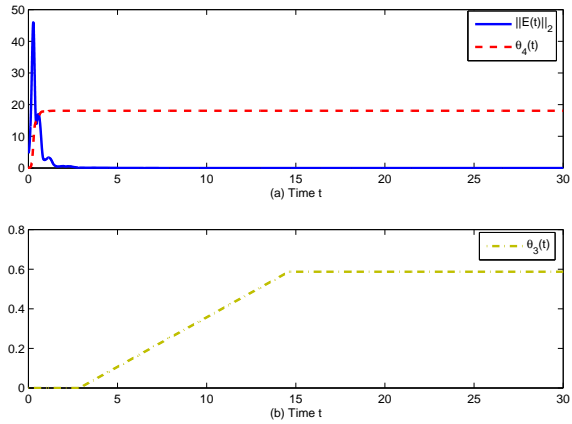}
\caption{The dynamics of $\|E(t)\|_2$, $\theta_3(t)$ and $\theta_4(t)$ under the controller (\ref{comp0}) with adaptive rules (\ref{comp1}) and (\ref{comp2})}\label{adeg6}
\end{center}
\end{figure}

\section{Conclusion}\label{con}
In this paper, we concentrate our attention on FnTSta of delayed systems. Using our proposed 2PM, we design an external control with two terms, FnTSta is realized by using $2$-norm, $1$-norm, and $\infty$-norm, respectively. The requirement of time delay is loose, which can be asynchronous and unbounded. Corresponding convergence rate is also defined using the function $\mu(t)$. AFnTSta is also proved using 2PM under our proposed adaptive rules. Moreover, we apply the obtained theoretical results on network outer synchronization, FnTOSyn and AFnTOSyn are also proved.

We hope this paper may shed some light on FnTSta for delayed systems. Further works include more applications of 2PM, such as FnTSyn under intermittent control.

\section*{Appendix A: Proof of Theorem \ref{1normthm}}
\begin{proof}
2PM will be shown in the next proof procedure. If $\sup_{t-\pi(t)\le s\le t}\|p(s)\|_{1}\le 1, t\ge T$, then we will switch to Phase II directly; else, we will start from Phase I.

{\bf Phase I:} Define a Lyapunov function
\begin{align}\label{v5}
V_5(t)=\mu(t)\|p(t)\|_1=\mu(t)\sum_{i=1}^m|p_i(t)|
\end{align}
and $W_5(t)=\sup_{t-\pi(t)\le s \le t}V_5(s)$. Obviously, $V_5(t)\leq W_5(t)$.

If $V_5(t)<W_5(t)$. Then there must exist a constant $\delta_5>0$ such that $V_5(s)\le W_5(t)$ for $s\in (t,t+\delta_5)$, i.e., $W_5(s)=W_5(t)$ for $s\in (t,t+\delta_5)$.
	
Else if there exists a time $t_5\ge T$, $V_5(t_5)= W_5(t_5)$, then
\begin{align}
&\dot{V}_5(t){|}_{t=t_5}=\dot{\mu}(t)\sum_{i=1}^m|p_i(t)|+\mu(t)\sum_{i=1}^m\mathrm{sgn}(p_i(t))\dot{p}_i(t)\nonumber\\
=&\dot{\mu}(t)\sum_{i=1}^m|p_i(t)|+\mu(t)\sum_{i=1}^m\mathrm{sgn}(p_i(t))[c_1p_i(t)\nonumber\\
&+c_2p_i(t-\pi_i(t))-c_3\mathrm{sgn}(p_i(t))-c_4p_i(t)]\nonumber\\
\le&\bigg[\frac{\dot{\mu}(t)}{\mu(t)}+(c_1-c_4)\bigg]V_5(t)+|c_2|\mu(t)\sum_{i=1}^m|p_{i}(t-\pi_i(t))|\nonumber\\
=&\bigg[\frac{\dot{\mu}(t)}{\mu(t)}+(c_1-c_4)\bigg]V_5(t)\nonumber\\
+&|c_2|\frac{\mu(t)}{\mu(t-\pi(t))}\sum_{i=1}^m\frac{\mu(t-\pi(t))}{\mu(t-\pi_i(t))}\mu(t-\pi_i(t))|p_{i}(t-\pi_i(t))|\nonumber\\
\le&\bigg[\frac{\dot{\mu}(t)}{\mu(t)}+(c_1-c_4)+|c_2|m\frac{\mu(t)}{\mu(t-\pi(t))}\bigg]V_5(t)<0\nonumber
\end{align}

Therefore,
\begin{align}
W_5(T)&\ge W_5(t)=\sup_{t-\pi(t)\le s\le t}\mu(s)\|p(s)\|_1\nonumber\\
&\ge \mu(t-\pi(t))\sup_{t-\pi(t)\le s\le t}\|p(s)\|_1\nonumber
\end{align}
i.e., there exists a point $T_5(\ge T)$,
\begin{align*}
\sup_{t-\pi(t)\le s\le t}\|p(s)\|_1\le \frac{W_5(T)}{\mu(t-\pi(t))}\le 1,~~t\ge T_5
\end{align*}

{\bf Phase II:} From condition (\ref{c2}), we can find a sufficiently small constant $\varepsilon_2>0$, such that $m(|c_2|-c_3)+\varepsilon_2<0$. Using this $\varepsilon_2$, we define another Lyapunov functional
\begin{align}\label{v6}
V_6(t)=\|p(t)\|_1+\varepsilon_2t, t\ge T_1
\end{align}
and $W_6(t)=\sup_{t-\pi(t)\le s\le t}V_6(s), t\ge T_1$.

Obviously, $V_6(t)\le W_6(t)$. If $V_6(t)<W_6(t)$, then there must exist a constant $\delta_6>0$ such that $V_6(s)\le W_6(t)$ for $s\in (t,t+\delta_6)$, i.e., $W_6(s)=W_6(t)$ for $s\in (t,t+\delta_6)$.
	
Else if there exists a point $t_6\ge T_5$, $V_6(t_6)= W_6(t_6)$. Differentiating $V_6(t)$ along (\ref{const}), we have
\begin{align}
&\dot{V}_6(t){|}_{t=t_6}=\sum_{i=1}^m\mathrm{sgn}(p_i(t))[c_1p_i(t)+c_2p_i(t-\pi_i(t))\nonumber\\
&-c_3\mathrm{sgn}(p_i(t))-c_4p_i(t)]+\varepsilon_2\nonumber\\
\le&(c_1-c_4)\|p(t)\|_1+|c_2|\sum_{i=1}^m|p_i(t-\pi_i(t))|-c_3m+\varepsilon_2\nonumber\\
\le&m(|c_2|-c_3)+\varepsilon_2<0\nonumber
\end{align}

Therefore,
\begin{align*}
&\|p(t)\|_1+\varepsilon_2t=V_6(t)\le W_6(t)\le W_6(T_5) \\
=&\sup_{T_5-\pi(T_5)\le s\le T_5}\bigg(\|p(t)\|_1+\varepsilon_2 s\bigg)\le1+\varepsilon_2 T_5
\end{align*}
so $\|p(t)\|_1\le 1-\varepsilon_2(t-T_5)$, i.e., $p(t)\equiv 0$ for any
$t\ge T_6=T_5+{1}/{\varepsilon_2}$.
\end{proof}

\section*{Appendix B: Proof of Theorem \ref{athmnorm1}}
Because of page limit, here we just present corresponding functions, details can be added according to the procedure in Theorem \ref{athm} and calculations in Theorem \ref{1normthm}.

{\bf Phase I}: Choose
\begin{align*}
\overline{V}_5(t)=\mu(t)\|p(t)\|_1+\frac{1}{2d_2}(c_4(t)-c_4^{\star})^2
\end{align*}
whose derivative satisfies
\begin{align*}
\dot{\overline{V}}_5(t)\le\bigg[\frac{\dot{\mu}(t)}{\mu(t)}+(c_1-c_4^{\star})+|c_2|m\frac{\mu(t)}{\mu(t-\pi(t))}\bigg]V_5(t)<0\nonumber
\end{align*}

{\bf Phase II}: Choose
\begin{align*}
&\overline{V}_6(t)\\
=&\|p(t)\|_1+\frac{1}{2d_1}(c_3(t)-c_3^{\star})^2+\frac{1}{2d_3}(c_4(t)-c_4^{\star})^2+\varepsilon_2^{\star}t
\end{align*}
whose derivative satisfies
\begin{align*}
\dot{\overline{V}}_6(t)\le&(c_1-c_4^{\star})\|p(t)\|_1+m(|c_2|-c_3^{\star})+\varepsilon_2^{\star}<0.
\end{align*}

\section*{Appendix C: Proof of Theorem \ref{wqnormthm}}
\begin{proof}
2PM will be shown in the next proof procedure. If $\sup_{t-\pi(t)\le s\le t}\|p(s)\|_{1}\le 1, t\ge T$, then we will switch to Phase II directly; else, we will start from Phase I.

{\bf Phase I:} Define a Lyapunov function
\begin{align}\label{v7}
V_7(t)=\mu(t)\|p(t)\|_{\infty}=\mu(t)\max_{i=1,\cdots,m}|p_i(t)|
\end{align}
and $W_7(t)=\sup_{t-\pi(t)\le s \le t}V_7(s)$. $V_7(t)\leq W_7(t), t\ge T$.
	
If $V_7(t)<W_7(t)$. Then there must exist a constant $\delta_7>0$ such that $V_7(s)\le W_7(t)$ for $s\in (t,t+\delta_7)$, i.e., $W_7(s)=W_7(t)$ for $s\in (t,t+\delta_7)$.
	
Else if there exists a point $t_7\ge T$, $V_7(t_7)= W_7(t_7)$, for this time $t_7$, we suppose that $|p_{i^{\circ}}(t_7)|=\max_{i=1,\cdots,m}|p_i(t_7)|$. Then
\begin{align}
&\dot{V}_7(t){|}_{t=t_7}=\dot{\mu}(t)|p_{i^{\circ}}(t)|+\mu(t)\mathrm{sgn}(p_{i^{\circ}}(t))\dot{p}_{i^{\circ}}(t)\nonumber\\
=&\dot{\mu}(t)|p_{i^{\circ}}(t)|+\mu(t)\mathrm{sgn}(p_{i^{\circ}}(t))[c_1p_{i^{\circ}}(t)\nonumber\\
&+c_2p_{i^{\circ}}(t-\pi_{i^{\circ}}(t))-c_3\mathrm{sgn}(p_{i^{\circ}}(t))-c_4p_{i^{\circ}}(t)]\nonumber\\
\le&\big[\frac{\dot{\mu}(t)}{\mu(t)}+(c_1-c_4)+|c_2|\frac{\mu(t)}{\mu(t-\tau(t))}\big]V_7(t)<0\nonumber
\end{align}

Therefore,
\begin{align}
W_7(T)&\ge W_7(t)=\sup_{t-\pi(t)\le s\le t}\mu(s)\|p(s)\|_{\infty}\nonumber\\
&\ge \mu(t-\pi(t))\sup_{t-\pi(t)\le s\le t}\|p(s)\|_{\infty}\nonumber
\end{align}
i.e., there exists a point $T_7(\ge T)$,
\begin{align*}
\sup_{t-\pi(t)\le s\le t}\|p(s)\|_{\infty}\le \frac{W_7(T)}{\mu(t-\pi(t))}\le 1,~~t\ge T_7
\end{align*}

{\bf Phase II:} From condition (\ref{c2}), we can find a sufficiently small constant $\varepsilon_2>0$, such that $(|c_2|-c_3)+\varepsilon_2<0$. Using this $\varepsilon_2$, we define another Lyapunov function
\begin{align}\label{v8}
V_8(t)=\|p(t)\|_{\infty}+\varepsilon_2t,~~ t\ge T_7
\end{align}
and $W_8(t)=\sup_{t-\pi(t)\le s\le t}V_8(s), t\ge T_7$.

Similarly, $V_8(t)\le W_8(t)$. If $V_8(t)<W_8(t)$, then there must exist a constant $\delta_8>0$ such that $V_8(s)\le W_8(t)$ for $s\in (t,t+\delta_8)$, i.e., $W_8(s)=W_8(t)$ for $s\in (t,t+\delta_8)$.
	
Else if there exists a point $t_8\ge T_7$, $V_8(t_8)= W_8(t_8)$, for this time $t_8$, we suppose that $p_{i^{\diamond}}(t_8)=\max_{i=1,\cdots,m}p_i(t_8)$. Differentiating $V_8(t)$ along (\ref{const}), we have
\begin{align}
&\dot{V}_8(t){|}_{t=t_8}=\mathrm{sgn}(p_{i^{\diamond}}(t))[c_1p_{i^{\diamond}}(t)+c_2p_{i^{\diamond}}(t-\pi_{i^{\diamond}}(t))\nonumber\\
&-c_3\mathrm{sgn}(p_{i^{\diamond}}(t))-c_4p_{i^{\diamond}}(t)]+\varepsilon_2\le(|c_2|-c_3)+\varepsilon_2<0\nonumber
\end{align}

Therefore,
\begin{align*}
&\|p(t)\|_{\infty}+\varepsilon_2t=V_8(t)\le W_8(t)\le W_8(T_7) \\
=&\sup_{T_7-\pi(T_7)\le s\le T_7}\bigg(\|p(t)\|_{\infty}+\varepsilon_2 s\bigg)\le1+\varepsilon_2 T_7
\end{align*}
so $\|p(t)\|_{\infty}\le 1-\varepsilon_2(t-T_7)$, i.e., $p(t)\equiv 0$ for any
$t\ge T_8=T_7+{1}/{\varepsilon_2}$. The proof is completed.
\end{proof}

\section*{Appendix D: Proof of Theorem \ref{athmnormwq}}
Because of page limit, here we just present corresponding functions, details can be added according to the procedure in Theorem \ref{athm} and calculations in Theorem \ref{wqnormthm}.

{\bf Phase I}: Choose
\begin{align*}
\overline{V}_7(t)=\mu(t)\|p(t)\|_{\infty}+\frac{1}{2d_2}(c_4(t)-c_4^{\star})^2
\end{align*}
whose derivative satisfies
\begin{align*}
\dot{\overline{V}}_7(t)\le\bigg[\frac{\dot{\mu}(t)}{\mu(t)}+(c_1-c_4^{\star})+|c_2|\frac{\mu(t)}{\mu(t-\pi(t))}\bigg]V_7(t)<0\nonumber
\end{align*}

{\bf Phase II}: Choose
\begin{align*}
&\overline{V}_8(t)\\
=&\|p(t)\|_{\infty}+\frac{1}{2d_1}(c_3(t)-c_3^{\star})^2+\frac{1}{2d_3}(c_4(t)-c_4^{\star})^2+\varepsilon_2^{\star}t
\end{align*}
whose derivative satisfies
\begin{align*}
\dot{\overline{V}}_8(t)\le&(c_1-c_4^{\star})\|p(t)\|_{\infty}+(|c_2|-c_3^{\star})+\varepsilon_2^{\star}<0.
\end{align*}

\section*{Acknowledgment}
The author would give thanks to Han Zhang, Weijin Liu, Zihan Li and Hailian Ma for their beneficial discussions.

\begin{IEEEbiography}[{\includegraphics[width=1in,height=1.25in,clip,keepaspectratio]{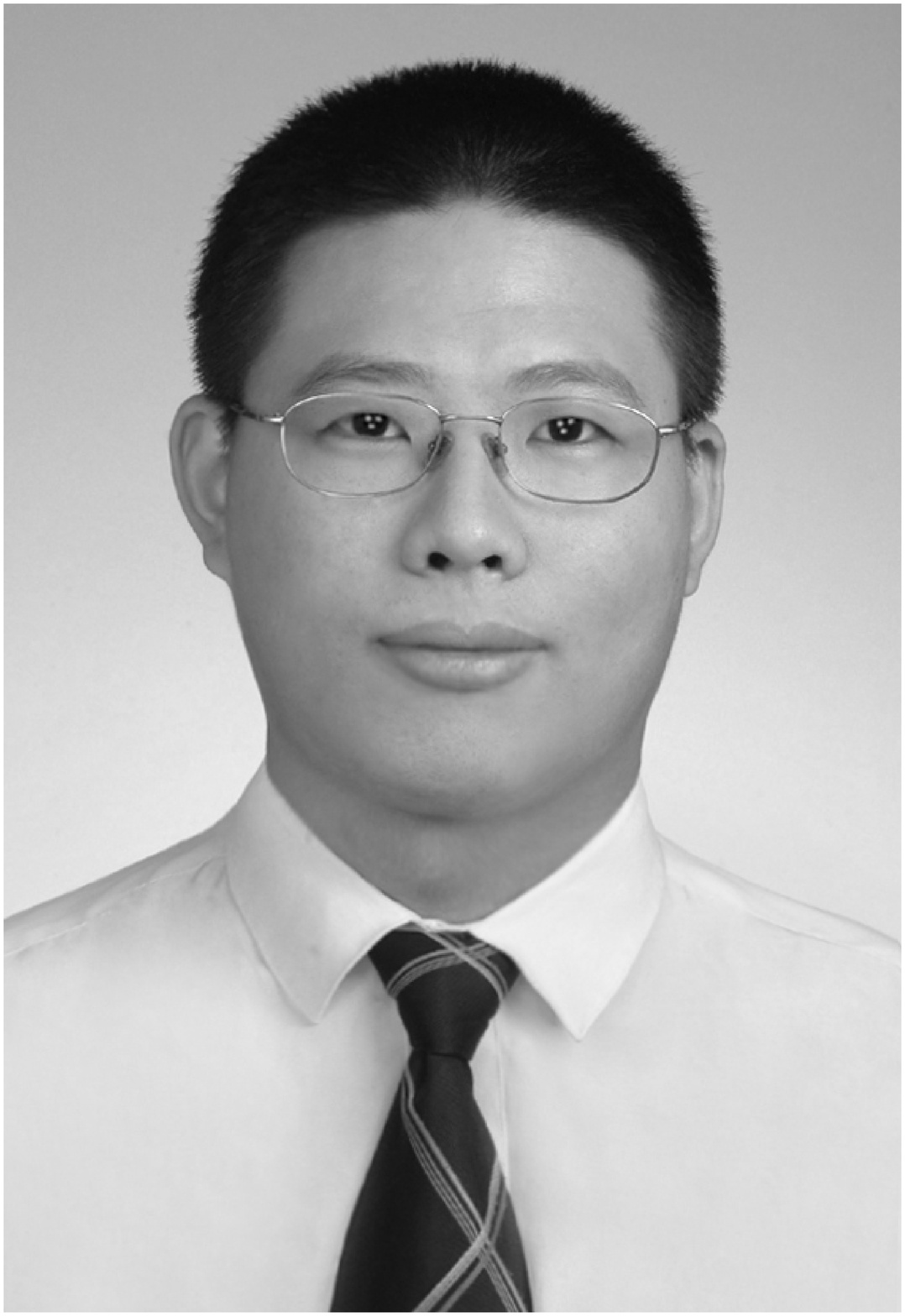}}]{Xiwei Liu}
(M'14, SM'17) received the Ph.D. degree in applied mathematics from Fudan University, Shanghai, China, in 2008. He was a Post-Doctoral Researcher with the Department of Physics, Fudan University, from 2008 to 2010. Then, he joined the Department of Computer Science and Technology, Tongji University, Shanghai. He has been a Professor with the Department of Computer Science and Technology, Tongji University since Dec. 2016. His current research interests include nonlinear dynamical systems, complex networks, and neural networks. He is a recipient of Natural Science Award from Ministry of Education of China in 2015.

He is currently an Associate Editor of \emph{Neurocomputing}.
\end{IEEEbiography}

\end{document}